\documentclass[11pt]{article}%
\usepackage{eurosym}
\usepackage[T1]{fontenc}
\usepackage[utf8]{inputenc}
\usepackage[top=1in, bottom=1in, left=1in, right=1in]{geometry}

\usepackage[longnamesfirst]{natbib}
\usepackage{comment}
\usepackage{dsfont}
\usepackage{xcolor}

\usepackage{tikz}
\usetikzlibrary{arrows.meta, calc, positioning, patterns,matrix,decorations.pathreplacing}
\usepackage{pgfplots}
\pgfplotsset{compat=1.18}
\usepgfplotslibrary{fillbetween}

\usepackage{graphics}
\usepackage{graphicx}
\usepackage{color}
\usepackage{amsmath}
\usepackage{amsthm}
\usepackage{amssymb}
\usepackage{amsfonts}
\usepackage[normalem]{ulem}
\usepackage{float}
\usepackage[noamsmath]{kpfonts}

\usepackage[font=normalsize,justification=centering]{caption}
\DeclareCaptionLabelFormat{captionf}{\textsc{#1~#2}}
\numberwithin{equation}{section}
\newtheorem{theorem}{Theorem}[section]
\newtheorem{lemma}{Lemma}[section]
\newtheorem{proposition}{Proposition}[section]

\newtheorem{corollary}{Corollary}[section]

\theoremstyle{definition}
\newtheorem{definition}{Definition}[section]

\theoremstyle{remark}

\definecolor{ForestGreen}{rgb}{.13,.54,.13}
\definecolor{BrickRed}{rgb}{.80,.26,.33}
\definecolor{light_gray}{RGB}{230,230,230}
\definecolor{Gray}{RGB}{160,160,160}
\definecolor{Purple}{HTML}{808080}
\definecolor{Black}{HTML}{000000}
\definecolor{Blackish}{HTML}{283747}
\definecolor{Whiteish}{HTML}{979A9A}

\definecolor{myred}{HTML}{E24A33}  % adjust as desired
\definecolor{myblue}{HTML}{348ABD} % adjust as desired

\definecolor{Mahogany}{rgb}{0.55, 0.18, 0.0}
\usepackage{hyperref}
\hypersetup{colorlinks=true,linkcolor=Mahogany,citecolor=Mahogany,urlcolor=Blackish}

\ifdefined\DRAFT
\newcommand{\fed}[1]{{\color{BrickRed}{(\textbf{Fedor:} #1)}}}
\newcommand{\ben}[1]{{\color{blue}{(\textbf{Ben:} #1)}}}
\else
\newcommand{\fed}[1]{}
\newcommand{\ben}[1]{}

\fi

\newcommand{\R}{{\mathbb{R}}}

\newcommand{\E}{{\mathbb{E}}}

\newcommand{\CCE}{\mathrm{CCE}}
\newcommand{\CE}{\mathrm{CE}}
\newcommand{\NE}{\mathrm{NE}}
\newcommand{\IRCP}{\mathrm{IRCP}}

\newcommand{\conv}{\mathrm{conv}}
\newcommand{\supp}{\mathrm{supp}}

\title{Delegation in Strategic Environments and Equilibrium Uniqueness\thanks{We thank Yu Awaya, Zoe Hitzig, Marcin P\k{e}ski, Doron Ravid, Philipp Strack, Juuso Toikka, Tristan Tomala, and Leeat Yariv for helpful comments and suggestions. We also appreciate the input of %seminar and conference audiences 
the audience at ASSA~2026.}}

\author{Fedor Sandomirskiy\thanks{Princeton University. Email: fsandomi@princeton.edu} \and
        Ben Wincelberg\thanks{Caltech. Email: bwincelb@caltech.edu}}

\begin{document}

\maketitle

\begin{abstract}
We ask when a normal-form game yields a single equilibrium prediction, even if players can coordinate by delegating play to an intermediary such as a platform or a cartel. Delegation outcomes are modeled via coarse correlated equilibria (CCE) when the intermediary cannot punish deviators, and via the set of individually rational correlated profiles (IRCP) when it can. We characterize games in which the IRCP or the CCE is unique, uncovering a structural link between these solution concepts. Our analysis also provides new conditions for the uniqueness of classical correlated and Nash equilibria that do not rely on the existence of dominant strategies. The resulting equilibria are robust to players' information about the environment, payoff perturbations, pre-play communication, equilibrium selection, and learning dynamics. We apply these results to collusion-proof mechanism design.
\end{abstract}

\section{Introduction}

Markets can fail or, more generally, lead to outcomes the regulator did not intend when strategic environments leave too much room for coordination. Sellers form cartels, auction bidders form rings, and platforms can steer market participants to supracompetitive pricing and surplus-extraction outcomes. These coordination channels matter because they can dramatically expand the set of possible outcomes relative to the decentralized Nash benchmark. The familiar problem of multiple Nash equilibria amplifies once we admit intermediaries that can correlate play and discipline deviators. New delegation technologies raise the stakes: delegating economic decisions to AI agents and platforms makes it feasible to execute contingent plans at scale, and the same infrastructure that improves matching and pricing can also facilitate tacit coordination.

For analysts and policymakers, the puzzle is therefore not whether coordination can occur, but when it ceases to matter for prediction and design. If a mechanism, market rule, or regulatory environment admits a wide range of coordinated outcomes, then evaluation and implementation become hostage to assumptions about communication protocols, enforcement possibilities, and equilibrium selection. Conversely, if the underlying strategic interaction admits a single outcome even under rich coordination, then the designer can rely on a sharp prediction robust to institutions the designer may not observe or control.

In this paper, we propose an approach to this puzzle that is agnostic about the intermediary’s objective. Rather than modeling a particular cartel agreement, platform algorithm, or communication protocol, we ask a structural question about the strategic environment itself:
\begin{quote}
\emph{When does a strategic interaction deliver a single equilibrium prediction even if players can delegate play to an intermediary with substantial coordinating power?}
\end{quote}
We consider general normal-form games and formalize delegation as an intermediary who is authorized by players to act on their behalf and thus selects a joint distribution over action profiles. The set of outcomes such an intermediary can sustain depends on its enforcement technology, that is, its ability to ensure that players continue to delegate their decisions. At one extreme, the intermediary cannot orchestrate punishment and can only withdraw correlation; delegation then corresponds to the set of coarse correlated equilibria (CCE) \citep*{hannan1957approximation,moulin1978strategically}. At the other extreme, the intermediary can coordinate all non-deviators on an adversarial response, in the spirit of the individually rational constraint in the folk theorem. Delegation then corresponds to the set of individually rational correlated profiles (IRCP) \citep*[][Chapter 6]{myerson1991book}, that is, correlated outcomes that give each player at least what she can guarantee against hostile play. In contrast with the common perspective treating these sets as permissive benchmarks, we show that they can collapse to a point in economically relevant environments, and we characterize exactly when this occurs.

Our first group of results studies the strongest intermediaries. We show that if IRCP is a singleton, then the implied outcome must be a pure Nash equilibrium. We then provide a complete characterization (Theorem~\ref{th:IRCP}): a game has a unique IRCP if and only if, after a positive affine transformation of utilities, it becomes an \emph{enforcement game}. 

In an enforcement game there is a designated \emph{self-enforcing} action profile $a^*$ with three properties: (i) each player's utility at $a^*$ is normalized to a common reference level; (ii) each player $i$ can guarantee at least that level unilaterally by playing $a_i^*$; and (iii) $a^*$ is the unique maximizer of utilitarian welfare. To see why $a^*$ must be a unique IRCP, note that any distribution that places positive weight away from $a^*$ lowers expected welfare; hence some player must fall below the utility that she can guarantee by playing $a_i^*$ and would reject delegation. Thus in such environments even the most powerful intermediary cannot sustain any outcome other than $a^*$. The other direction---that such games exhaust those with unique IRCP---is the essence of the theorem.

The nature of enforcement games is distinct from the dominance condition commonly used to ensure equilibrium uniqueness. A dominant action profile trivially yields unique Nash, correlated, and coarse correlated equilibria, but it need not yield a unique IRCP: the prisoner's dilemma is the canonical counterexample, where a Pareto improvement over the dominant action profile remains individually rational and thus survives as a delegated outcome. Enforcement games rule out such improvements not by strengthening incentives pointwise but by creating a negative-sum externality: once play leaves the self-enforcing profile $a^*$, expected welfare falls below what compliance guarantees each player. From a design perspective, the characterization tells a planner how to achieve enforcement by targeted payoff modifications (e.g., transfers or fines) that are often far smaller than those needed to create dominance.

The second set of results studies weak intermediaries that cannot coordinate punishment. 
We show that CCE uniqueness has an equally sharp structure. When the unique CCE is pure, it arises exactly in games that are strategically equivalent to enforcement games (Theorem~\ref{th:CCE}). Strategic equivalence allows player-specific payoff shifts that depend only on others’ actions. Such shifts preserve action-by-action comparisons relevant for CCE but affect utility levels induced by hostile play of others and thus are not respected by IRCP. This result yields an unexpected structural link between our two extremes: a unique CCE is exactly the unique IRCP of a suitably transformed game. This link provides an easy-to-check criterion for uniqueness of CCE and, from a design perspective, shows how to modify payoffs to make equilibrium outcomes additionally robust to increases in the intermediary’s enforcement power.

While our characterization of CCE uniqueness focuses on pure equilibria, we show that allowing mixed outcomes changes little. A unique CCE can be mixed, but only in a near-degenerate sense: if the CCE is unique and not pure, then exactly two players randomize over two actions each, and the induced $2\times 2$ interaction is of matching-pennies type (Proposition~\ref{prop_unique_CCE_mixed}). Furthermore, in symmetric environments, uniqueness forces purity. 
Together, these observations suggest that mixed uniqueness is essentially a two-player artifact, ruled out whenever the environment exhibits symmetry or lacks a designated pair of players with a special strategic role.

Our results provide a toolkit for designing collusion-proof interactions. The key lesson is that ruling out coordinated deviations does not require engineering dominant strategies. To make a target profile $a^*$ the unique CCE, it is enough to ensure that any alternative profile generates sufficiently strong negative externalities. This boils down to checking one inequality, making uniqueness tractable.

We illustrate this logic in an application to contests, which are a workhorse framework for rent-seeking, R\&D races, litigation, political competition, and sports. In contests, a natural concern is that participants coordinate on low efforts and effectively share the prize. We demonstrate that such collusion is impossible in a stark sense for broad classes of two-player contests (Proposition~\ref{prop_contest}): whenever a contest admits a strict pure Nash equilibrium, that equilibrium is the unique CCE. We then exploit the characterization to show how canonical forms such as Tullock can be improved in terms of equilibrium effort and fairness toward low-effort participants, while preserving uniqueness of CCE.

\paragraph{Roadmap.} The rest of the paper proceeds as follows. In the remainder of this introduction, we discuss related literature. Section~\ref{sec:model} introduces the model of delegation and the benchmark solution concepts, CCE and IRCP. Section~\ref{sec:IRCP} characterizes games with a unique IRCP and shows that they are exactly those affinely equivalent to enforcement games. Section~\ref{sec:CCE} characterizes games with a unique CCE, establishes the link to IRCP through strategic equivalence, and describes applications to contests. Section~\ref{sec:extensions} discusses extensions and open questions.

\paragraph{Related literature.}
Our paper builds on and connects several strands of work on correlation, delegation, and equilibrium selection.  At a conceptual level, we study robustness of equilibrium predictions as one enlarges the set of feasible coordination devices. Nash outcomes sit inside the set of correlated equilibria (CE) \citep*{aumann1974subjectivity,aumann1987correlated}, and $\mathrm{CE}\subseteq\mathrm{CCE}\subseteq\mathrm{IRCP}$.  Our focus is the polar case in which these increasingly permissive solution concepts nevertheless collapse to a single outcome.  When that happens, the prediction becomes insensitive not only to equilibrium selection within Nash, but also to whether players can communicate, correlate, or delegate.

A large literature provides foundations for these solution concepts based on communication, learning, and long-run interaction.  CE captures outcomes of pre-play communication \citep*{barany1992fair,ben1998correlation,gerardi2004unmediated,lehrer1997one}; CCE is the target of no-external-regret learning \citep*{hannan1957approximation}; learning dynamics satisfying stronger notions such as no-internal-regret converge to CE \citep*{foster1997calibrated,hart2000simple,fudenberg1999conditional}.  In repeated games, equilibrium payoffs must satisfy individual rationality constraints \citep*{benoit1984finitely,fudenberg1986folk,aumann1994long} and thus empirical distributions of action profiles are captured by IRCP. Imperfect monitoring can shrink the outcomes that can be sustained; \cite{awaya2019communication} identify regimes in which the repeated play can be approximated by CCE rather than IRCP. \cite{denti2023robust} relate CCE to outcomes induced by rational inattention in games.
Taken together, these foundations imply that games with unique IRCP or CCE exhibit exceptional robustness: the same outcome emerges across communication protocols, learning rules, and long-horizon strategic environments.

Our motivating perspective differs from the classical ``benevolent mediator'' view: we take the intermediary to be potentially misaligned with a regulator and ask when the underlying strategic environment itself rules out harmful coordination.  This concern resonates with the growing literature on algorithmic and platform-induced coordination \citep*{calvano2020artificial,fish2024algorithmic,assad2024algorithmic}, as well as with the broader alignment perspective in which delegation to AI agents, whose objectives may not match those of the players, can shift outcomes without violating players’ individual rationality constraints \citep*{liang2025artificial,fudenberg2025friend,hadfield2019incomplete}.  It also connects to the literature on cartels and bidding rings, where intermediated coordination induces correlated play \citep*[e.g.,][]{mcafee1992bidding,ortner2024mediated}.  Our contribution is to provide a characterization of when such coordination opportunities fail to enlarge the outcome set, phrased in terms of the primitives of the strategic interaction.

 \citet*{moulin1978strategically} introduced CCE with a different motivation: to expand the implementable set relative to CE, and subsequent work established that the inclusion is strict in many environments \citep*{gerard1978correlation,ray2013coarse,moulin2014coarse,moulin2014improving,feldman2016correlated,dokka2023equilibrium}.  In parallel, the literature on learning in games discovered CCE as a set containing limits of no-regret dynamics \citep*{hannan1957approximation,hart2000simple,hart2003regret,hart2013simple}. Sometimes referred to as the Hannan set, the term ``coarse correlated equilibrium'' was popularized by  \cite*{young2004strategic} and \cite*{roughgarden2015intrinsic}.  

The IRCP benchmark is much less studied. While implicit in folk-theorem reasoning, it is rarely treated as a solution concept because of its apparent permissiveness \citep*[][Chapter~6]{myerson1991book}.  Our results show that, despite this permissiveness, IRCP can deliver sharp, non-knife-edge predictions in economically meaningful environments—and that the structure behind uniqueness links IRCP and CCE through strategic equivalence.
\cite*{csoka2024guaranteed} introduce a solution concept reminiscent of IRCP, called guaranteed utility equilibrium (GUE), which combines individual guarantees similar to those of IRCP with Pareto optimality. The relation to our results is discussed in Appendix~\ref{app_GUE}. Dynamic mechanism design applications of GUE are explored by \cite*{csoka2023collusion} and \cite*{csoka2025prior}.

General uniqueness results are scarce even for Nash equilibrium, and even scarcer for the more permissive solution concepts we study. Trivially, Nash, CE, and CCE are unique in games that admit a dominant action profile. Beyond dominance, CE is unique in potential games with a strictly concave potential and in related generalizations \citep*{neyman1997correlated,ui2008correlated,cao2025correlated}, as well as in generic zero-sum games \citep*{forges1990correlated,viossat2006geometry}; see also \cite*{kattwinkel2022mechanisms} and \cite*{kartik2024information}. Uniqueness of CE is also known to confer strong robustness to incomplete information and payoff perturbations \citep*{kajii1997robust,viossat2008having,einy2022strong}. A unique CE implementation of a social choice function is studied by \cite*{pei2025robust} and \cite*{banerjee2025correlated}. 

For CCE, uniqueness holds in the payoff space for zero-sum games \citep*{moulin1978strategically,macqueen2023proof} and symmetric non-atomic congestion games admitting a convex potential \citep*{koessler2025correlated}. Strictly socially concave games, introduced by \citet*{even2009convergence}, are a class in which no-regret dynamics converge to a single limiting payoff vector; \citet*{hart2015markets} link this convergence to the uniqueness of the CCE.
These results imply unique CCE in environments such as Tullock contests and linear Cournot oligopolies with strictly convex costs; further examples of games with a unique CCE can be found in \citet*{nadav2010no,baudin2023strategic,ahunbay2024uniqueness}.  Our approach recovers social concavity as a sufficient condition for uniqueness by linking it to enforcement games.  More broadly, our characterizations yield new tractable conditions for the uniqueness of CCE, and (because singleton CCE imply singleton CE and Nash) new sufficient conditions for uniqueness of CE and Nash that do not rely on dominant strategies.

Finally, on the technical level, our paper contributes to understanding how duality and extreme-point arguments can be used to derive structural properties of solutions in games. In this respect, we broaden the scope of extreme-point methods that have recently become prominent in economic theory, e.g., \citep*{kleiner2021extreme,nikzad2022constrained,arieli2023optimal,kleiner2024extreme,lahr2024extreme,yang2024monotone,rudov2025extreme,augias2025economics,yang2025multidimensional,kleiner2026extreme}.

\section{Games and Solutions}
\label{sec:model}

A normal-form game is a tuple $G=(N,(A_i)_{i\in N},(u_i)_{i\in N})$, where $N=\{1,\ldots,n\}$ is a finite set of players, each player $i\in N$ has an action set $A_i$, and $u_i\colon A\to\R$ is a von Neumann--Morgenstern utility function defined on the set of action profiles $A=\times_{i\in N}A_i$. For each $i\in N$, write $A_{-i}=\times_{j\neq i}A_j$. When no confusion arises, we suppress $N$ and $(A_i)_{i\in N}$ and write $G=(A,u)$.

Unless stated otherwise, all games are finite, meaning that each $A_i$ is a finite set. However, most of our results also extend, at least in part, to games with a continuum of actions. We discuss such extensions separately. To avoid trivialities, we also assume $|A_i|\ge 2$ for all $i\in N$.

Let $\Delta(A)$ denote the probability simplex over $A$. A distribution $\mu\in\Delta(A)$ represents the outcome selected by an intermediary that correlates players' actions; when $\mu$
 is a product distribution, it can equivalently represent a decentralized outcome in which players mix independently.

Our central solution concept for modeling delegation is \textbf{Coarse Correlated Equilibrium (CCE)} \citep{hannan1957approximation,moulin1978strategically}. A distribution $\mu \in \Delta(A)$ is a CCE if, for every player $i \in N$ and every deviation $a_i' \in A_i$,
$$
        \underbrace{\sum_{a \in A} \mu(a) u_i(a)}_{\text{utility of delegation}} \ge \underbrace{\sum_{a \in A} \mu(a) u_i(a_i', a_{-i})}_{\text{utility of a fixed action $a_i'$}}    
$$
CCE captures the scenario where the intermediary is weak and cannot punish players who opt out of delegation. Accordingly, when player $i$ considers deviating, she assumes that the intermediary does not react to her deviation and that others keep playing what they would have played absent the deviation; however, she loses information about the realized action profile.
CCE is exactly the set of correlated action profiles for which no player would find such a deviation profitable.

An alternative way to interpret CCE is to assume that the intermediary recommends  actions to players rather than taking actions on their behalf,  but players
commit \textit{ex ante} to follow these recommendations before observing them. 
It is instructive to compare this interpretation with that of classical \textbf{Correlated Equilibrium (CE)} \citep{aumann1974subjectivity}, which captures situations where each recommendation can be reconsidered \textit{ex post}.
A distribution $\mu\in\Delta(A)$ is a CE if, for every player $i\in N$ and every pair of actions $a_i,a_i'\in A_i$,
$$
\underbrace{\sum_{a_{-i}\in A_{-i}}\mu(a_i,a_{-i})\,u_i(a_i,a_{-i})}_{\text{utility of playing $a_i$ whenever it is recommended}}
\ge \underbrace{\sum_{a_{-i}\in A_{-i}}\mu(a_i,a_{-i})\,u_i(a_i',a_{-i})}_{\text{utility of playing $a_i'$ whenever $a_i$ is recommended}}.
$$

We next define a permissive benchmark concept that corresponds to the opposite extreme of the intermediary's power: when player $i$ deviates, the intermediary can coordinate all other players on an adversarial response that minimizes $i$'s payoff. Following \citet[Chapter 6]{myerson1991book}, a distribution $\mu\in \Delta(A)$ is an \textbf{Individually Rational Correlated Profile (IRCP)} if 
$$
        \underbrace{\sum_{a \in A} \mu(a) u_i(a)}_{\text{utility of delegation}} \ge \underbrace{\max_{\nu\in \Delta(A_i)}\min_{a_{-i}}\sum_{a_i \in A_i} \nu(a_i) u_i(a_i, a_{-i})}_{\text{what $i$ can guarantee against adversarial play}}    
$$
That is, each player's expected utility must be at least their individually rational maximin level reminiscent of the classical folk theorem.\footnote{To avoid confusion, we emphasize that the folk theorem characterizes payoff vectors that can arise in repeated interaction, whereas our object of interest is the induced distribution over action profiles. In particular, the same individually rational payoff vector may correspond to multiple action distributions $\mu$, each of which we classify as IRCP.
Another distinction concerns the definition of the guarantee level. By the minimax theorem, the term $\max_{\nu\in\Delta(A_i)}\min_{a_{-i}\in A_{-i}}$ in the definition of $\IRCP$ can equivalently be written as $\min_{\tau\in\Delta(A_{-i})}\max_{a_i\in A_i}$, reflecting that the intermediary can correlate the actions of the punishing players. By contrast, correlated punishments cannot be sustained in the folk theorem under standard assumptions (no communication or mediation, no common randomization devices, perfect monitoring); see, e.g., \cite*{fudenberg1986folk,horner2006folk}.}

The individual rationality benchmark is often viewed as too permissive to qualify as a solution concept and is instead seen as a constraint on outcomes that require further refinement. Nevertheless, we will show that it delivers sharp predictions in some economically relevant environments.
Importantly, IRCP allows the intermediary to coordinate punishments even when carrying them out hurts the non-deviators. Adding feasibility constraints on punishments would weaken the intermediary,  shrinking the implementable set relative to $\IRCP$ and resulting in intermediate solution concepts between $\IRCP$ and $\CCE$.\footnote{For example, a classical modeling choice to ensure that punishments do not violate non-deviators' individual rationality constraint, going back to the folk theorem of \cite*{friedman1971non}, is to punish deviations by switching to a Nash equilibrium of the stage game. This approach is common in work on cartels and collusion in auctions, e.g., \cite*{RACHMILEVITCH20131714,ortner2024mediated,nakabayashi2025scoring}.} While our analysis focuses on $\IRCP$ and $\CCE$ as the two extremes of intermediary power, our results also speak to such intermediate concepts, since uniqueness of $\IRCP$ implies their uniqueness.

\medskip
For a given game $G$, we denote the sets of equilibria by $\CCE(G)$, $\CE(G)$, and $\IRCP(G)$, respectively. These sets are defined by linear inequalities and hence are convex polytopes contained in $\Delta(A)$. All these polytopes are non-empty for finite games, as they contain the set of \textbf{Nash equilibria} $\NE(G)$. 
The solution concepts satisfy the chain of inclusions
$$
\NE(G)\subseteq \CE(G)\subseteq \CCE(G)\subseteq \IRCP(G).
$$
Furthermore, any product distribution $\nu=\nu_1\times\cdots\times\nu_n$  contained in $\CCE(G)$ (and hence in $\CE(G)$) is necessarily a Nash equilibrium. In contrast, $\IRCP(G)$ may contain additional product distributions such as the product of maximin strategies.

\smallskip

To illustrate the distinction between CE, CCE, and IRCP, consider the classic rock-paper-scissors game. As is generically true for zero-sum games \citep{forges1990correlated}, CE is unique, coinciding with the unique Nash equilibrium where both players uniformly randomize independently of each other. CCE is more expansive, including any correlated action profile with uniform marginals in which each player receives an expected utility of zero. For example, the uniform distribution over (rock, rock), (paper, paper), and (scissors, scissors) constitutes a CCE. IRCP is the most permissive solution concept, requiring only that each player receive zero utility in expectation. This includes, for example, the distribution that assigns probability $1/2$ to the two profiles (rock, paper) and (paper, rock).

\medskip

The notions of CCE and IRCP extend straightforwardly to games with continuum action spaces, provided action sets are measurable and utilities are measurable and bounded. In the inequalities defining these solution concepts, summation is replaced with integration and the maximin level of IRCP is replaced with the $\sup\inf$ level.
The non-emptiness of these solution concepts is guaranteed under standard assumptions of compactness and continuity, which in particular imply the existence of a mixed Nash equilibrium. While we primarily focus on finite games, we will discuss some extensions and examples pertaining to the continuous environment within the following sections.

\section{Unique Individually Rational Correlated Profiles}\label{sec:IRCP}

We start by considering our most permissive solution concept, IRCP. Accordingly, requiring that $\IRCP(G)$ is unique imposes the most stringent restrictions on the strategic environment. The intuitions developed here will be useful for addressing the uniqueness questions for CCE in the next section.

Since $\NE(G)\subseteq \CE(G)\subseteq \CCE(G)\subseteq \IRCP(G)$, the uniqueness of IRCP immediately implies the uniqueness of all the other equilibrium sets.
Accordingly, the decentralized Nash outcome in such games is robust to rich informational and institutional environments: allowing correlation via a mediator (CE) or delegation (CCE) does not enlarge the predicted outcome set when $\IRCP(G)$ is a singleton. The same robustness carries over to standard dynamic foundations---no-regret learning (external regret for CCE and internal regret for CE) converges to the corresponding equilibrium sets, and long-horizon repeated-game equilibria must be individually rational---so uniqueness of $\IRCP(G)$ pins down these predictions to the same Nash outcome as well.
\smallskip

To get intuition about what constraints the uniqueness of IRCP imposes on a game, consider two benchmark games with mixed Nash equilibria: matching pennies and rock--paper--scissors. In both games, each player's individually rational level equals the value of the underlying two-player zero-sum game, which under the standard normalization is $0$. Hence, any distribution $\mu$ that yields expected payoff $0$ to both players belongs to $\IRCP(G)$. The set of such distributions contains mixed Nash equilibria as well as other distributions. For example, in either game, the distribution in which one player mixes uniformly while the other plays a pure action yields expected payoff $0$ to both players and therefore lies in $\IRCP(G)$. This non-uniqueness is closely tied to the fact that equilibrium play is mixed: the individually rational constraints pin down only the value $0$, leaving many action distributions compatible with it.

As the following proposition shows, a unique IRCP must necessarily be a pure Nash equilibrium. Moreover, each player's equilibrium action must be the unique best response to the actions of others. Formally, an action profile $a^*\in A$ is a \textbf{strict} Nash equilibrium if $u_i(a^*)>u_i(a_i,a_{-i}^*)$ for all~$i$ and~$a_i\ne a_i^*$.
\begin{proposition}\label{pr:IRCPpure}
    If $\IRCP(G)$ is a singleton, then it is a pure Nash equilibrium. Furthermore, it is strict.
\end{proposition}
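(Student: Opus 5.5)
Let $\IRCP(G)=\{\mu^*\}$. Since $\NE(G)\subseteq\IRCP(G)$ and finite games have a Nash equilibrium, $\mu^*$ is a Nash equilibrium. We must show that it is pure and strict. Write $v_i=\max_{\nu\in\Delta(A_i)}\min_{a_{-i}}\sum_{a_i}\nu(a_i)u_i(a_i,a_{-i})$ for player $i$'s maximin level. The plan has two steps: (1) show that all individual rationality constraints bind with slack in a way that forces $\mu^*$ to be a vertex of the polytope; (2) use a perturbation argument to rule out both mixing and weak best responses.

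\emph{Step 1: $\mu^*$ is pure.} The set $\IRCP(G)$ is the intersection of $\Delta(A)$ with the $n$ half-spaces $\{\mu:\sum_a\mu(a)u_i(a)\ge v_i\}$. Suppose first that some player's constraint is slack at $\mu^*$, or more generally consider the affine map $\mu\mapsto (U_i(\mu))_{i\in N}$. I would argue as follows. The product $\mu^*=\mu_1^*\times\cdots\times\mu_n^*$ is Nash, and its support is $\supp\mu_1^*\times\cdots\times\supp\mu_n^*$. If some $\mu_j^*$ is not degenerate, consider the distributions $\mu^{b}=\delta_{b}\times\mu_{-j}^*$ for $b\in\supp\mu_j^*$. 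Each such distribution gives player $j$ the same payoff $U_j(\mu^*)\ge v_j$ by the indifference property of Nash. For the other players, the average of the $\mu^b$ weighted by $\mu_j^*(b)$ is $\mu^*$. Hence, for every $i\neq j$, at least one $b$ has $U_i(\mu^b)\ge U_i(\mu^*)$, but we need a single $b$ that works for all $i$ at once, which is not automatic.

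A cleaner route is to work directly with directions. The feasible set near $\mu^*$ contains $\mu^*+\varepsilon d$ for any direction $d$ with $\sum_a d(a)=0$, $d\ge0$ off $\supp\mu^*$, and $\sum_a d(a)u_i(a)\ge0$ for all $i$ (using $U_i(\mu^*)\ge v_i$). Uniqueness means no nonzero such $d$ exists. Take $d=\mu^{b}-\mu^{b'}$ for $b,b'\in\supp\mu_j^*$. Player $j$ gets $0$ from $d$, while for each other player either $d$ or $-d$ is nonnegative. With $n=2$ this already suffices, since only one other player exists: choose the sign of $d$ that favors player $i$. For general $n$, I would instead choose the direction $d=\mu_j^*\times(\delta_{c}-\delta_{c'})$-type moves in a single other coordinate, or combine: the linear system of $n$ inequalities in the $(|\supp\mu^*|-1)$-dimensional space of signed measures on the support has a nonzero solution whenever the dimension exceeds $n$ minus the number of players who are indifferent. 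Since player $j$'s payoff is constant along all $\mu^b$-differences, one constraint is automatically satisfied. The main obstacle is making this dimension count work in general, i.e.\ exhibiting enough indifference directions. A cone generated by $k$ vectors with fewer than $k$ linear functionals always contains a nonzero vector on which all functionals are $\ge0$ (Farkas/Gordan). So I would show that the number of independent directions of the form $\mu^b-\mu^{b'}$ and their analogues for other mixing players exceeds the number of constraints that are not identically zero on them.

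\emph{Step 2: strictness.} Given that $\mu^*=\delta_{a^*}$ is a pure Nash equilibrium, suppose some $a_i\ne a_i^*$ has $u_i(a_i,a^*_{-i})=u_i(a^*)$. I would consider $\mu_\varepsilon=(1-\varepsilon)\delta_{a^*}+\varepsilon\delta_{(a_i,a^*_{-i})}$. Player $i$'s payoff is unchanged, so $i$'s constraint is preserved. For $j\ne i$, we need $U_j(\mu_\varepsilon)\ge v_j$, which holds for small $\varepsilon$ if $u_j(a^*)>v_j$. If instead $u_j(a^*)=v_j$ for some $j$, I would handle it by a separate argument: a binding constraint combined with $\mu^*$ being the unique point suggests that the IR constraints intersect $\Delta(A)$ only at a vertex. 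I would then use the maximin strategy of $j$ to produce other feasible points, for example mixing $a^*$ with profiles in which $j$ plays the maximin strategy. This binding case is the second delicate point.
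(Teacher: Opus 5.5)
\textbf{There is a genuine gap: the proposal does not prove the statement as written.}

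\emph{Step 1 works only for $n=2$.} There, the direction $d=(\delta_b-\delta_{b'})\times\mu^*_{-j}$ keeps the mixing player $j$ indifferent, and you can choose the sign of $d$ to favor the single remaining player. For $n\ge 3$ you leave the ``single $b$ for all $i$'' problem open, and the dimension count you sketch cannot close it. Suppose player $j$ mixes over two actions and everyone else plays a pure action. Then the zero-mass signed measures on $\supp\mu^*$ form a one-dimensional space spanned by $d$. If two other players with binding individual rationality constraints rank the two support profiles oppositely, neither $d$ nor $-d$ is feasible. The Farkas/Gordan claim you quote is also false for cones: the cone spanned by $e_1,e_2$ with the single functional $-(x_1+x_2)$ contains no nonzero admissible vector. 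It holds only for linear subspaces, via a kernel argument, and that is exactly the count that fails here.

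\emph{Step 2 is also incomplete.} You leave the binding case open. An analysis local to $\mu^*$ that uses only the Nash property and the numbers $v_i$ cannot settle these cases. You have to use \emph{where} the guarantee levels come from.

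\textbf{The missing idea.} The product of maximin strategies $\nu=\nu_1\times\cdots\times\nu_n$ is always an $\IRCP$. It remains an $\IRCP$ when a single player $i$ replaces $\nu_i$ by a best response $a_i$ to $\nu_{-i}$:
\begin{itemize}
\item player $i$ then gets at least $u_i(\nu_i,\nu_{-i})\ge v_i$;
\item each $j\ne i$ still plays a maximin strategy and so secures $v_j$ whatever $i$ does.
\end{itemize}
Uniqueness forces $\delta_{a_i}\times\nu_{-i}=\nu$, so $\nu_i=\delta_{a_i}$ for every $i$. Hence the unique $\IRCP$ is a pure profile $a^*$. Moreover, $a_i^*$ must be the unique best response to $a^*_{-i}$, since another best response $a_i'$ would make $\delta_{(a_i',a^*_{-i})}$ a second $\IRCP$. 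This is the paper's proof.

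This observation removes both of your difficulties at once. Each $a_j^*$ is then a maximin action, so $u_j(a_i,a^*_{-i})\ge v_j$ for all $j\ne i$. In your Step 2, the deviation profile $(a_i,a^*_{-i})$ is therefore itself an $\IRCP$, whichever constraints bind. In your Step 1, knowing that each $\mu_i^*$ is maximin means any $b\in\supp\mu_j^*$ works for all other players simultaneously. You need neither Nash existence nor a perturbation or dimension-counting argument.
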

The proposition is based on the following simple structural insight. $\IRCP(G)$ always contains a canonical outcome: the product distribution where every player uses their maximin strategy. If $\IRCP(G)$ is a singleton, this distribution must equal the unique Nash equilibrium, which drives the result.

\begin{proof}
 Let $\nu_i$ be player $i$'s maximin strategy. Then $\nu=\nu_1\times \cdots \times \nu_n$ is an IRCP.
 We now show that $\nu$ is a pure Nash equilibrium which is, moreover, strict. 
 Let $a_i$ be an action giving player $i$ the highest expected utility against $\nu_{-i}$. Then $\nu'=\delta_{a_i}\times\nu_{-i}$ is also an IRCP since all $j\ne i$ play their maximin strategies and thus are above their maximin level and player $i$ is also above this level by the choice of $a_i$. Using the fact that IRCP is a singleton, we conclude that $\nu_i=\delta_{a_i}$. Thus $\nu$ is a point mass. Finally, $a_i$ must be the unique best response to $\nu_{-i}$, so $\nu$ is a strict Nash equilibrium. 
\end{proof}

Proposition~\ref{pr:IRCPpure} implies that, in our pursuit of games with a unique IRCP, we must focus on those that have a unique Nash equilibrium which is pure. A natural class of games with this property is that of dominance-solvable games, where sequential elimination of strictly dominated actions leads to a single action profile.  
However, even a much stronger property---the existence of a dominant action for each of the players---which ensures that $\NE(G)=\CE(G)=\CCE(G)$ are singletons, does not guarantee the uniqueness of IRCP.

For example, consider the prisoner's dilemma. The action $d$ (defect) is dominant for both players, but $\IRCP$ is not a singleton. To see this, note that the set $\IRCP(G)$ is closed under Pareto improvements for any game $G$. Since full cooperation $(c,c)$ Pareto improves over the Nash outcome $(d,d)$ in the prisoner's dilemma, we conclude that $(d,d)$, $(c,c)$, as well as their mixtures lie in $\IRCP$.

This example might convince the reader that the uniqueness of IRCP is even more demanding than the existence of dominant actions. 
However, as we will see, IRCP can be unique even in games that possess neither dominant actions nor dominated ones.
\smallskip

Consider the following example. There is a circular road with paid parking. Parking yields utility of $v$ and costs $c>0$. Two drivers decide whether to pay or to park without paying. If a driver decides not to pay, she enjoys the utility of $v$ but may be caught, in which case she pays a ticket of amount $t>c$. The probability of being caught depends on the location  $L=\{l_1,\ldots, l_m\}$, $m\geq 3$, on the loop at which they park. An inspector inspects the parking locations counterclockwise along the loop, starting from a uniformly random location. If both drivers decide not to pay, only the one the inspector finds first gets a ticket (upon seeing this, the other driver drives away). If the drivers choose the same location, each gets a ticket with probability $1/2$. If only one driver parks illegally, she gets a ticket for sure.

To get some intuition about the incentives in this game, suppose that both drivers decide not to pay for parking. If one of them chooses the location $l_i$, the other one minimizes the chance of being caught by parking at $l_{i+1}$ (with the standard cyclic index convention). That is, each violator wants to park right after the other along the inspector's way. If the first driver parks at $l_i$ and the second parks at $l_j$ with $j\ne i$, then the probability that the second driver is caught is given by $\frac{(j-i) \mod m}{m}$, which is exactly the probability that the inspector's starting point is in the range between $l_{i+1}$ and $l_j$. The resulting expected payoffs are presented in Figure~\ref{fig:ticketing} for $m=3$ locations.

The pair of actions (pay, pay) is always a Nash equilibrium of this game. However, paying for parking is not a dominant action unless the ticket amount is so big that even parking next to the other violator and thus receiving a ticket with probability of only $1/m$ gives a lower utility than paying. Indeed, paying becomes dominant for $t>m\cdot c$. We now argue that IRCP is a singleton even for much smaller values of $t$.

Indeed, the maximin level is $v-c$, which each player guarantees by paying. For $t>2c$, in any profile other than (pay, pay) the sum of utilities is less than $2(v-c)$, meaning that at least one player receives less than their maximin level. Similarly, in any correlated profile placing positive weight outside of (pay, pay), the expected welfare is below $2(v-c)$ and thus at least one player is below their maximin level. Hence, no such correlated profile is an IRCP. Accordingly, for $t\in (2c,\,mc]$, paying is the unique NE, CE, CCE, and IRCP, while not being a dominant action profile. 

\begin{figure}[t]
\centering
\begin{tikzpicture}[font=\small]
\matrix (G) [matrix of math nodes,
  nodes in empty cells,
  nodes/.append style={draw, align=center, minimum height=9mm, minimum width=35mm, inner sep=2pt},
  row sep=\pgflinewidth, column sep=\pgflinewidth
]{
  & \text{pay} & l_1 & l_2 & l_3\\
  \text{pay}
    & (v-c,\,v-c) & (v-c,\,v-t) & (v-c,\,v-t) & (v-c,\,v-t)\\
  l_1
    & (v-t,\,v-c)
    & (v-\frac{t}{2},\,v-\frac{t}{2})
    & (v-\frac{2t}{3},\,v-\frac{t}{3})
    & (v-\frac{t}{3},\,v-\frac{2t}{3})\\
  l_2
    & (v-t,\,v-c)
    & (v-\frac{t}{3},\,v-\frac{2t}{3})
    & (v-\frac{t}{2},\,v-\frac{t}{2})
    & (v-\frac{2t}{3},\,v-\frac{t}{3})\\
  l_3
    & (v-t,\,v-c)
    & (v-\frac{2t}{3},\,v-\frac{t}{3})
    & (v-\frac{t}{3},\,v-\frac{2t}{3})
    & (v-\frac{t}{2},\,v-\frac{t}{2})\\
};

% --- Separating lines between labels and payoffs ---
% Vertical separator: between column 1 (row labels) and column 2 (payoffs)
\coordinate (Vtop) at ($(G-1-1.north east)!0.5!(G-1-2.north west)$);
\coordinate (Vbot) at ($(G-5-1.south east)!0.5!(G-5-2.south west)$);
% keep x from Vbot, y from both  -> perfectly vertical
\draw[line width=0.9pt] (Vbot) -- (Vbot |- Vtop);

% Horizontal separator: between row 1 (column labels) and row 2 (payoffs)
\coordinate (Hleft)  at ($(G-1-1.south west)!0.5!(G-2-1.north west)$);
\coordinate (Hright) at ($(G-1-5.south east)!0.5!(G-2-5.north east)$);
\draw[line width=0.9pt] (Hleft) -- (Hright);

% --- Column player's brace over l1,l2,l3 (top) ---
\draw[decorate,decoration={brace,amplitude=4pt,raise=2pt}]
  (G-1-3.north west) -- (G-1-5.north east)
  node[midway,above=6pt] {\scriptsize not pay and park at};

% --- Row player's "rotated overbrace" over l1,l2,l3 (left, vertical) ---
% Draw a brace spanning rows l1..l3 in the first column, outside to the left,
% and rotate the text by 90 degrees.
\draw[decorate,decoration={brace,mirror,amplitude=4pt,raise=2pt}]
  (G-3-1.north west) -- (G-5-1.south west)
  node[midway,xshift=-10pt,rotate=90] {\scriptsize not pay and park at};

\end{tikzpicture}
\caption{Expected payoff matrix for $m=3$ locations. Entries are $(u_1,u_2)$.}
\label{fig:ticketing}
\end{figure}

The IRCP uniqueness that we observed in this game is robust to the details of the game. One can easily modify the construction to allow for $n\geq 2$ drivers, non-uniform randomization by the inspector, catching more than one violator, or a continuum of locations.

The phenomenon we observed in this example is common to a broader class of $n$-person games that we now describe.
\begin{definition}\label{def:enforcement}
   We call $H=(A,v)$ an \emph{enforcement game} if there is an action profile $a^*\in A$ such that
   $$v_i(a^*)=0,\qquad v_i(a_i^*,a_{-i})\geq 0 \quad \text{for all $i$ and $a_{-i}$} \qquad\text{and}\qquad \sum_i v_i(a)<0\quad \text{for all $a\ne a^*$}.$$
   That is, each player's utility at $a^*$ is zero, each player $i$ can guarantee this utility level unilaterally by playing $a^*_i$, and $a^*$ is the only profile maximizing utilitarian welfare. We call $a^*$ the \emph{self-enforcing profile}. 
\end{definition}

The self-enforcing profile $a^*$ is a strict Nash equilibrium. At $a^*$ every player receives $0$, while any unilateral deviation $a=(a_i,a_{-i}^*)$ with $a_i\ne a_i^*$ yields a strictly negative payoff to the deviator: the total welfare satisfies $\sum_j v_j(a)<0$, yet each non-deviating player $j\ne i$ contributes $v_j(a)\ge 0$, so the entire welfare shortfall is borne by player~$i$.

Mimicking the argument from the parking example, we obtain the following lemma.
\begin{lemma}\label{lm:enf_unique}
The self-enforcing profile $a^*$ is a unique $\IRCP$.
\end{lemma}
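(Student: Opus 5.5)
The plan is to show two things: that $\delta_{a^*}$ belongs to $\IRCP(H)$, and that every $\mu\in\IRCP(H)$ equals $\delta_{a^*}$. Both rest on one preliminary fact: each player's maximin level in $H$ is exactly $0$. For the lower bound, playing the pure action $a_i^*$ guarantees $\min_{a_{-i}} v_i(a_i^*,a_{-i})\ge 0$ by the second condition in Definition~\ref{def:enforcement}. For the upper bound, fix any $\nu\in\Delta(A_i)$ and let the opponents play $a_{-i}^*$. Then $\sum_{a_i}\nu(a_i)v_i(a_i,a_{-i}^*)\le 0$, because $v_i(a^*)=0$ and, as noted after the definition, $v_i(a_i,a_{-i}^*)<0$ for every $a_i\ne a_i^*$. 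Hence the maximin level equals $0$.

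Membership of $\delta_{a^*}$ is then immediate: it gives every player $v_i(a^*)=0$, which meets the individually rational level.

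For uniqueness, take any $\mu\in\IRCP(H)$. Individual rationality gives $\sum_a\mu(a)v_i(a)\ge 0$ for every $i$. Summing over players yields
$$
0\le \sum_{a}\mu(a)\sum_i v_i(a)=\sum_{a\ne a^*}\mu(a)\sum_i v_i(a),
$$
where the equality uses $\sum_i v_i(a^*)=0$. Each term on the right has $\sum_i v_i(a)<0$, so the right-hand side is strictly negative unless $\mu(a)=0$ for all $a\ne a^*$. Therefore $\mu=\delta_{a^*}$.

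No step is a genuine obstacle. The only point that needs care is computing the maximin level exactly rather than just bounding it below, since individual rationality must be pinned at the threshold $0$. The upper bound via the opponents' profile $a_{-i}^*$ handles this, and the rest is the welfare-summation argument from the parking example.
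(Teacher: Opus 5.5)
Your proof is correct and follows essentially the same route as the paper: negative expected welfare off $a^*$ forces some player below the level $0$ that $a_i^*$ guarantees. The one small difference is that the paper obtains $\delta_{a^*}\in\IRCP(H)$ from $a^*$ being a Nash equilibrium, whereas you verify the upper bound on the maximin level directly via the opponents' profile $a_{-i}^*$; the two arguments amount to the same thing.
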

\begin{proof}
    Since $a^*$ is a Nash equilibrium, it is an $\IRCP$. It remains to show that any $\mu\in \Delta(A)$ placing positive weight on $A\setminus \{a^*\}$ cannot be an IRCP. The expected welfare induced by such a $\mu$ is negative. Thus there is a player $i$ with negative expected utility. However, this player can guarantee a payoff of at least $0$ by playing $a_i^*$. Hence, this player is below their maximin level at $\mu$ and thus $\mu$ is not an IRCP.
\end{proof}

One can construct new games with a unique IRCP by noticing that the set of $\IRCP$ is preserved under affine equivalence of games.
Formally, consider games $G=(A,u)$ and  $H=(A,v)$ sharing the same set of action profiles and players. These games are \textbf{affinely equivalent} if $v_i(a)=\gamma_i (u_i(a) +\beta_i) $ for some constants $\gamma_i>0$ and $\beta_i\in \R$. For such games, $\IRCP(G)=\IRCP(H).$ That is, the solution concept is not sensitive to the choice of von Neumann-Morgenstern utility representing the underlying preference over lotteries. 

Accordingly, a game affinely equivalent to an enforcement game also has a unique IRCP. It turns out that no other game has this property.

\begin{theorem} \label{th:IRCP}
    A game $G$ has a unique $\IRCP$ if and only if $G$ is affinely equivalent to an enforcement game.
\end{theorem}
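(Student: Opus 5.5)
The plan is to combine Proposition~\ref{pr:IRCPpure} with a separating-hyperplane (Farkas/minimax) argument. The ``if'' direction is already done: Lemma~\ref{lm:enf_unique} and the invariance of $\IRCP$ under affine equivalence give it. For the converse, assume $\IRCP(G)=\{\delta_{a^*}\}$. By Proposition~\ref{pr:IRCPpure}, $a^*$ is a strict pure Nash equilibrium. Let $m_i$ denote player $i$'s maximin level.

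\emph{Step 1: $a_i^*$ is a maximin strategy of each player.} The proof of Proposition~\ref{pr:IRCPpure} shows that the product of maximin strategies $\nu_1\times\cdots\times\nu_n$ is an IRCP. Since $\IRCP(G)$ is a singleton, this product equals $\delta_{a^*}$, so $\nu_i=\delta_{a_i^*}$. Hence $u_i(a_i^*,a_{-i})\ge m_i$ for all $a_{-i}$.

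\emph{Step 2: every constraint binds, i.e.\ $u_i(a^*)=m_i$ for all $i$.} Suppose instead $u_i(a^*)>m_i$ for some $i$. Pick $b_i\ne a_i^*$ and set
\[
\mu_\varepsilon=(1-\varepsilon)\delta_{a^*}+\varepsilon\,\delta_{(b_i,a_{-i}^*)}.
\]
Every $j\ne i$ plays $a_j^*$ in both profiles, so by Step 1 each such $j$ gets at least $m_j$ in each component. Player $i$ gets close to $u_i(a^*)>m_i$ for small $\varepsilon$. Thus $\mu_\varepsilon$ is a second IRCP, a contradiction. Combining Steps 1 and 2, $u_i(a_i^*,a_{-i})\ge u_i(a^*)$ for all $i$ and all $a_{-i}$.

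\emph{Step 3: duality, which I expect to be the main step.} Set $w_i(a)=u_i(a)-u_i(a^*)$. I first claim there is no $\mu\in\Delta(A\setminus\{a^*\})$ with $\E_\mu w_i\ge 0$ for all $i$. Indeed, any such $\mu$ would itself be an IRCP different from $\delta_{a^*}$, because $\E_\mu u_i\ge u_i(a^*)=m_i$.

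Now consider the finite zero-sum game in which the maximizer picks $a\in A\setminus\{a^*\}$, the minimizer picks $i\in N$, and the payoff is $w_i(a)$. By the claim and compactness of $\Delta(A\setminus\{a^*\})$, its value is strictly negative. The minimax theorem then yields weights $\lambda\in\Delta(N)$ with $\sum_i\lambda_i w_i(a)<0$ for every $a\ne a^*$. Since this is finitely many strict inequalities, it survives a small perturbation. I would therefore replace $\lambda$ by $\lambda+\varepsilon\mathbf 1$, making every $\gamma_i:=\lambda_i+\varepsilon$ strictly positive.

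\emph{Step 4: conclusion.} Define $v_i=\gamma_i\,(u_i-u_i(a^*))$; this is an affine transformation with $\gamma_i>0$ and $\beta_i=-u_i(a^*)$. Then $H=(A,v)$ satisfies the three conditions of Definition~\ref{def:enforcement} with self-enforcing profile $a^*$:
\begin{itemize}
\item $v_i(a^*)=0$ by construction;
\item $v_i(a_i^*,a_{-i})\ge 0$ for all $a_{-i}$, by Steps 1 and 2;
\item $\sum_i v_i(a)<0$ for all $a\ne a^*$, by Step 3.
\end{itemize}
So $G$ is affinely equivalent to an enforcement game.

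The delicate points are Step 2, where one must check that non-binding individual rationality constraints are impossible, and the strict-positivity fix in Step 3. Without both, the duality weights would not give a valid positive affine transformation.
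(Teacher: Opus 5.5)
Your proof is correct and matches the paper's argument step for step: the product of maximin strategies collapses to $a^*$, the individual-rationality constraints bind by the same $\varepsilon$-mixing argument, and the weights come from the minimax theorem applied to the same auxiliary zero-sum game. The only difference is how you get $\gamma_i>0$. You perturb $\lambda$ to $\lambda+\varepsilon\mathbf{1}$, whereas the paper notes that any minimax weight vector is already strictly positive: at a unilateral deviation $(a_i,a_{-i}^*)$ only player $i$'s term can be negative, so this step is not actually delicate.
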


\begin{proof}
    By Lemma~\ref{lm:enf_unique}, a game affinely equivalent to an enforcement game has a unique IRCP. So we only need to prove the other direction of the theorem.

    Consider a game $G=(A,u)$ with a unique IRCP and maximin levels $\underline{u}_i$. By Proposition~\ref{pr:IRCPpure}, this IRCP is a pure Nash equilibrium, where some action profile $a^*$ is played. Since the product of maximin strategies is also an IRCP, it coincides with $a^*$ by the uniqueness assumption. Thus  $a_i^*$ is $i$'s maximin strategy.
    Moreover, $\underline{u}_i=u_i(a^*)$. Indeed, if $\underline{u}_i<u_i(a^*)$, then player~$i$ will still be above their maximin level by mixing $a_i^*$ with some other action $a_i\ne a_i^*$ with weights $1-\varepsilon$ and $\varepsilon$ where $\varepsilon>0$ is small enough. Since other players keep playing their maximin strategies, they remain above their maximin levels. Thus the constructed distribution is an IRCP, contradicting its uniqueness.
    
    We now show that a positive affine transformation of utilities makes $G$ an enforcement game. The weights in this transformation arise from optimal strategies in an auxiliary zero-sum game.    

    Consider the following zero-sum game played between Maximizer and Minimizer. Maximizer selects an action profile $a\in A\setminus \{a^*\}$ and Minimizer selects a player $i\in N$. The Maximizer aims to maximize $u_i(a)-u_i(a^*)$. Since $G$ has no IRCP supported on $A\setminus \{a^*\}$, whatever mixed-strategy Maximizer chooses, Minimizer can respond by selecting a player receiving below her maximin level. %That is, Minimizer can force a negative payoff against any given strategy of the Maximizer. 
    By the minimax theorem, there is a mixed-strategy of Minimizer, $\gamma \in \Delta(N)$, that forces a negative payoff to Maximizer no matter what strategy Maximizer chooses:
    \begin{equation}\label{eq_minimizer}
        \sum_i \gamma_i\Big(u_i(a) - u_i(a^*)\Big)<0\qquad \text{for all $a\neq a^*$}.
    \end{equation}
Now plug in $a=(a_i,a_{-i}^*)$ with $a_i\neq a_i^*$. 
The terms with $j\ne i$
 are nonnegative since $a_j=a_j^*$
 is player $j$'s maximin strategy and thus guarantees $j$ at least $\underline{u}_j=u_j(a^*)$. Hence, the strict inequality can only be satisfied if $\gamma_i\Big(u_i(a)-u_i(a^*)\Big)<0$. Thus the weight  $\gamma_i>0$ for all~$i$.

    Consider the game $H=(A,v)$ with utilities given by  
    $$v_i(a)=\gamma_i\left(u_i(a)-u_i(a^*)\right).$$ 
    By~\eqref{eq_minimizer}, for any profile $a\neq a^*$, $ \sum_i v_i(a)<0$. Finally, since $a_i^*$ is player~$i$'s maximin action in $G$, for all $a_{-i}$, $u_i(a_i^*,a_{-i})\ge u_i(a^*)$, so $v_i(a_i^*,a_{-i})\ge 0$ and $v_i(a^*)=0$. We conclude that $H$ is an enforcement game  affinely equivalent to $G$.
\end{proof}

An implication of Theorem~\ref{th:IRCP} is that a unique IRCP must be a strong Nash equilibrium, meaning that any multilateral deviation from the equilibrium cannot strictly benefit all of the deviators. Indeed, any enforcement game has this property. Curiously, this connection between IRCP uniqueness and strong Nash equilibrium means that a demanding notion of robustness to unilateral deviations implies a form of stability against coalitional ones as well.
In fact, enforcement games exhibit an even stronger stability notion referred to as strict strong Nash equilibrium \citep{chwe1994farsighted}, 
requiring that, for any joint deviation, at least one deviator must be harmed.
Compared to strict strong Nash equilibria, the condition for a unique IRCP is more restrictive, since it further requires that a multilateral deviation does not harm any non-deviators.

It is also worth comparing games with a unique IRCP to those with a dominant action profile. Neither one of these conditions imply the other, as evidenced by the prisoner's dilemma, which has a unique dominant profile but a multiplicity of IRCP, and the parking game in Figure~\ref{fig:ticketing} with $t\in (2c,mc)$, which has a unique IRCP but no dominant profiles. The insight shed by the parking game applies more generally to any setting where a designer wants to set penalties to ensure that players cannot deviate from a desired action even when they are prone to delegate to a strong intermediary. As the parking example shows, the designer only needs to set penalties high enough such that at least one deviator would regret their deviation. For example, when it is challenging to penalize many deviators, a significant enough penalty applied to one scapegoat is enough to enforce the desired profile.

\paragraph{Continua of actions.} So far, all of our results and discussions have been formulated in the setting where each player has a finite set of actions. However, many of these insights extend to settings in which players have a continuum of actions. 

Our first result, Proposition~\ref{pr:IRCPpure}, which stated that a unique IRCP is a pure and strict Nash equilibrium, extends to the continuous action setting. Indeed, the proof of that proposition holds as stated for games with a continuum of actions as long as maximin strategies and best responses are well-defined. For example, it is enough to require that $A_i$ are compact sets and $u_i(a_i,a_{-i})$ is upper semicontinuous in $a_i$ and lower semicontinuous in $a_{-i}$. There are two aspects of this extension worth highlighting.  First, the proposition implies the existence of a Nash equilibrium under these mild regularity assumptions and uniqueness of IRCP.
Second, while the continuous setting is prone to creating indifferences, finding a pure Nash equilibrium that is \emph{not strict} is still enough to conclude that IRCP is not a singleton.

We also note that the definition of an enforcement game, as well as Lemma~\ref{lm:enf_unique} establishing uniqueness of IRCP for enforcement games, extend to continuum action spaces. The proof of Lemma~\ref{lm:enf_unique} goes through verbatim for arbitrary measurable spaces $A_i$ and measurable utilities $v_i\colon A\to\R.$ 

Finally, a version of Theorem~\ref{th:IRCP} holds for symmetric games with a continuum of actions. 
\begin{proposition}\label{prop_continuum_IRCP}
    Consider a game $G$ with measurable action sets and bounded measurable utilities. The following assertions hold:
\begin{itemize}
    \item If $G$ is affinely equivalent to an enforcement game, then it has a unique $\IRCP$ which is a pure action profile $a^*$.
    \item If $G$ is symmetric and each player has a maximin strategy, the converse is also true.
\end{itemize}
\end{proposition}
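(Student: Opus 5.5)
The first assertion is already in hand: the proof of Lemma~\ref{lm:enf_unique} goes through verbatim for measurable action sets and bounded measurable utilities, and $\IRCP$ is invariant under affine equivalence. So the work is the converse for symmetric games. The plan is to follow the proof of Theorem~\ref{th:IRCP}, with one change. The minimax step that produced the weights $\gamma$ may fail with a continuum of actions. I will replace it by a direct symmetrization argument that shows the equal weights $\gamma_i=1/n$ always work.

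\emph{Step 1: identify the unique $\IRCP$.} Let $\mu^*$ be the unique $\IRCP$ and let $\nu_i$ be a maximin strategy of player~$i$, which exists by assumption. The product $\nu_1\times\cdots\times\nu_n$ is an $\IRCP$, so it equals $\mu^*$. Since $\mu^*$ is unique, it is invariant under permutations of players, because the symmetric structure maps $\IRCP$ to $\IRCP$. Now compare two maximin strategies of the same player: each gives a product distribution that equals $\mu^*$, so they coincide. Together with symmetry, this forces $\nu_1=\cdots=\nu_n=:\nu$ and a common maximin level $\underline{u}$.

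\emph{Step 2: show $\nu$ is a point mass $\delta_x$ with $u_i(a^*)=\underline u$, where $a^*=(x,\dots,x)$.} I repeat the argument of Proposition~\ref{pr:IRCPpure}, but without relying on best responses. Suppose $\nu$ is not a point mass. Then $\nu$ can be replaced, for one player $i$ only, by a conditional restriction $\nu_i'$ of $\nu$ to a set on which $i$'s expected payoff against $\nu_{-i}$ is at least its average; such a set of positive measure exists. The resulting distribution $\nu_i'\times\nu_{-i}$ is still an $\IRCP$: player $i$ weakly gains, and the others play maximin strategies. It differs from $\mu^*$, a contradiction. Hence $\nu=\delta_x$. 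Next suppose $u_i(a^*)>\underline u$. Let player $i$ mix $x$ with weight $1-\varepsilon$ and some other action with weight $\varepsilon$. Boundedness of utilities keeps $i$ above $\underline u$ for small $\varepsilon$, and the others still play maximin actions and stay above $\underline u$. This yields a second $\IRCP$, again a contradiction. So $u_i(a^*)=\underline u$ for all~$i$, and $u_i(x,a_{-i})\ge \underline u$ for every $a_{-i}$ because $x$ is a maximin action.

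\emph{Step 3: build the enforcement game (the key step).} Set $v_i(a)=u_i(a)-\underline u$, which is affine with $\gamma_i=1$. Steps 1 and 2 give $v_i(a^*)=0$ and $v_i(a^*_i,a_{-i})\ge 0$. It remains to show $\sum_i v_i(a)<0$ for every $a\ne a^*$. Suppose instead that $\sum_i u_i(a)\ge n\underline u$ for some $a\ne a^*$. Let $\mu$ be the uniform distribution over the permuted profiles $a_\pi=(a_{\pi(1)},\dots,a_{\pi(n)})$ for $\pi$ ranging over all permutations. By symmetry of the game, each player's expected utility under $\mu$ equals $\frac1n\sum_j u_j(a)\ge\underline u$. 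Hence $\mu$ is an $\IRCP$. But $\mu$ is supported off $a^*$, because $a^*$ is symmetric and so every permutation of $a\ne a^*$ is also $\ne a^*$. This contradicts uniqueness.

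I expect the main obstacle to be the care needed in Step 2, namely showing purity without assuming best responses exist. The conditional-restriction trick should handle it, but it has to be phrased measurably. Step 3 is where symmetry does the real work, replacing the minimax theorem.
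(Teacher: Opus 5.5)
Your proof is correct and follows the paper's argument. The first bullet is Lemma~\ref{lm:enf_unique} plus affine invariance. For the converse, you pin down $a^*$ and the common guarantee via the product of maximin strategies. Your Step~3 is the paper's averaging over permutations of a profile $b\neq a^*$, where symmetry lets equal weights replace the minimax step. You also make explicit that $a^*=(x,\dots,x)$, which the paper uses tacitly.

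One caveat concerns Step~2. Purity of the unique $\IRCP$ is part of the hypothesis of the converse, so the purity argument is unnecessary. As written, it is also incomplete. Write $f(y)$ for player $i$'s payoff from $y$ against $\nu_{-i}$ and $\bar f=\int f\,d\nu$. To get an $\IRCP$ different from $\mu^*$, you need a set $S$ with $0<\nu(S)<1$. The natural choice $\{f\ge\bar f\}$ has full measure precisely when $i$ is $\nu$-a.e.\ indifferent, which is the typical mixed case, and then conditioning returns $\nu$ itself. In that case any $S$ with $0<\nu(S)<1$ works, but for a non-Dirac $\nu$ such an $S$ exists on, e.g., standard Borel spaces, not on arbitrary measurable spaces.
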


This proposition is proved in Appendix~\ref{app_continuum_IRCP}. The symmetry assumption in the second item obviates the need for the duality argument used to ensure strict inequality for the welfare at~$a^*$ and at other profiles in the proof of Theorem~\ref{th:IRCP}. Without symmetry, strict inequality in the continuous setting is not guaranteed. The assumption of the existence of a maximin strategy in the second item can be replaced with compactness of $A_i$ and upper semicontinuity of $u_i$ for all $i$.

\section{Unique Coarse Correlated Equilibria}\label{sec:CCE}

In this section, we study games that have a unique CCE. Since CCE is contained in IRCP, any game with a unique IRCP automatically has a unique CCE.

Enforcement games defined in the previous section (Definition~\ref{def:enforcement}) have a unique IRCP. Since IRCP is invariant to positive affine transformations, applying them to enforcement games again preserves uniqueness. CCE is invariant under a broader class of transformations, which immediately gives a class of games with unique CCE that is larger than that with unique IRCP.

Consider two games $G=(A,u)$ and $H=(A,v)$ sharing the same set of action profiles and players. These games are \textbf{strategically equivalent} if 
$v_i(a_i,a_{-i})=\gamma_i\left(u_i(a_i, a_{-i})+\beta_i(a_{-i})\right)$ where $\gamma_i>0$ and  $\beta_i\colon A_{-i}\to \R$. That is, the utilities in $H$ can be obtained from the utilities in $G$ by rescaling and adding a function that depends only on the actions of others to each player's utility. In contrast to positive affine transformations, the additive perturbation $\beta_i$ may not be constant. 

The set of IRCP is not preserved under strategic equivalence since adding a function of others' actions to $i$'s utility may affect the ability of others to punish $i$ for deviations. By contrast, CCE predictions agree for strategically equivalent games: $\CCE(G)=\CCE(H)$. Indeed,
only utility differences $u_i(a_i,a_{-i})-u_i(a_i',a_{-i})$ are relevant for $i$'s incentives in a CCE, and such differences are preserved up to scale.
In fact, this broader invariance is what singles out $\CCE$ within the set of $\IRCP$.
\begin{lemma}\label{lm:CCE_SE_IRCP}
A distribution  $\mu$ belongs to $\CCE(G)$ if and only if $\mu$ belongs to  $\IRCP(H)$ for all $H$ strategically equivalent to $G$.
\end{lemma}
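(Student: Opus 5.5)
The argument splits into the two directions, with the reverse direction carried by one explicit choice of $\beta_i$.

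\emph{Forward direction.} Suppose $\mu\in\CCE(G)$ and $H$ is strategically equivalent to $G$. Since $\CCE(H)=\CCE(G)$, we have $\mu\in\CCE(H)$, and the inclusion $\CCE(H)\subseteq\IRCP(H)$ then gives $\mu\in\IRCP(H)$. For completeness I would also verify this inclusion directly. Fix a maximin strategy $\nu_i$ of player $i$ in $H$. Averaging the CCE inequalities of $H$ over $a_i'\sim\nu_i$ yields
$$\sum_a\mu(a)v_i(a)\ge\sum_{a}\mu(a)\sum_{a_i'}\nu_i(a_i')v_i(a_i',a_{-i}).$$
For each fixed $a_{-i}$, the inner sum on the right is at least the maximin level of $i$ in $H$, so the right-hand side is at least that level.

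\emph{Reverse direction.} I argue by contrapositive. Suppose $\mu\notin\CCE(G)$, so there exist a player $i$ and an action $a_i'$ with
$$\sum_a\mu(a)u_i(a)<\sum_a\mu(a)u_i(a_i',a_{-i}).$$
The aim is a strategically equivalent $H$ in which $a_i'$ becomes the unique relevant guarantee, pinned at the level $0$. Take $\gamma_i=1$ and $\beta_i(a_{-i})=-u_i(a_i',a_{-i})$, so that $v_i(a)=u_i(a)-u_i(a_i',a_{-i})$. Leave all other players unchanged: $\gamma_j=1$ and $\beta_j\equiv 0$ for $j\neq i$.

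In $H$, player $i$ obtains exactly $v_i(a_i',a_{-i})=0$ against every $a_{-i}$ by playing $a_i'$, so her maximin level is at least $0$. Her expected utility under delegation is
$$\sum_a\mu(a)v_i(a)=\sum_a\mu(a)u_i(a)-\sum_a\mu(a)u_i(a_i',a_{-i})<0.$$
Her individual rationality constraint therefore fails, and $\mu\notin\IRCP(H)$, which completes the contrapositive.

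The main step is the choice of $\beta_i$ in the reverse direction. Once $\beta_i$ is chosen to neutralize the deviation payoff, turning the deviation $a_i'$ into a constant guarantee of $0$, the CCE violation translates exactly into an IRCP violation and the rest is bookkeeping. The only detail to confirm is that this transformation is admissible: $\beta_i$ depends only on $a_{-i}$ and $\gamma_i=1>0$, so $H$ is indeed strategically equivalent to $G$.
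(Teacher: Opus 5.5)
Your proposal is correct and follows the paper's proof essentially verbatim. The forward direction uses invariance of $\CCE$ under strategic equivalence together with $\CCE(H)\subseteq\IRCP(H)$; your extra verification of that inclusion, by averaging over a maximin strategy, is a valid addition. The reverse direction uses exactly the paper's transformation $v_i(a)=u_i(a)-u_i(a_i',a_{-i})$, with all other players' utilities left unchanged.
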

In other words, $\CCE$ is the set of all correlated outcomes that satisfy individual rationality and respect strategic equivalence.\footnote{ The proof shows that it is enough to require strategic equivalence with respect to a very particular set of transformations that make the utility of a certain action $a_i'$ exactly $0$ for a given player $i$.}
\begin{proof}
    Consider a game $G=(A,u)$ and $\mu\in \CCE(G)$. Since $\CCE$ respects strategic equivalence, $\mu\in \CCE(H)$ for all strategically equivalent $H$. Thus $\mu\in \IRCP(H)$ since $\CCE(H)$ is a subset of $\IRCP(H)$.
    Conversely, for $\mu\notin \CCE(G)$, we show that $\mu\notin \IRCP(H)$ for some strategically equivalent $H$. Since $\mu \notin \CCE(G)$, there is a player $i$ and an action $a_i'$ such that $\sum_{a} \mu(a) u_i(a) < \sum_{a} \mu(a) u_i(a_i', a_{-i})$. Consider $H=(A,v)$ with $v_j=u_j$ for $j\ne i$ and $v_i(a)=u_i(a)-u_i(a_i', a_{-i})$. Then player $i$ in $H$ can guarantee a utility of $0$ by playing $a_i'$. On the other hand,  $\sum_{a} \mu(a) v_i(a)<0$ and thus $\mu\notin \IRCP(H)$.     
\end{proof}

Enforcement games have a unique IRCP and thus games strategically equivalent to enforcement games have unique CCE.
In fact, such games exhaust all those with unique CCE that are pure.
\begin{theorem}\label{th:CCE}
    A game $G$ has a unique $\CCE$ that is pure if and only if $G$ is strategically equivalent to an enforcement game.
\end{theorem}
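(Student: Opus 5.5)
The "if" direction is immediate. If $G$ is strategically equivalent to an enforcement game $H$ with self-enforcing profile $a^*$, then $\CCE(G)=\CCE(H)\subseteq\IRCP(H)=\{\delta_{a^*}\}$ by Lemma~\ref{lm:enf_unique}. Since $\CCE(G)$ is nonempty, it equals the pure profile $\delta_{a^*}$. The work is in the converse, and I would follow the duality template from the proof of Theorem~\ref{th:IRCP}, replacing maximin guarantees with deviation constraints.

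Suppose $\CCE(G)=\{\delta_{a^*}\}$. Then $a^*$ is a Nash equilibrium. The defining CCE inequalities are linear in $\mu$: for each pair $(i,a_i')$ they read $\sum_a\mu(a)\,[u_i(a)-u_i(a_i',a_{-i})]\ge 0$. By uniqueness, no $\mu\in\Delta(A\setminus\{a^*\})$ satisfies them all.

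First I would rule out mixtures of $\delta_{a^*}$ with other mass. Take $\mu=(1-\varepsilon)\delta_{a^*}+\varepsilon\nu$ with $\nu$ supported off $a^*$. For each $i$, use the deviation $a_i^*$ to build the strategic equivalence:
\[
v_i(a)=\gamma_i\bigl(u_i(a)-u_i(a_i^*,a_{-i})\bigr).
\]
This gives $v_i(a^*)=0$ and $v_i(a_i^*,a_{-i})=0\ge 0$ for all $a_{-i}$ automatically. So the remaining task is to find weights $\gamma_i>0$ with
\[
\sum_i\gamma_i\bigl(u_i(a)-u_i(a_i^*,a_{-i})\bigr)<0 \quad\text{for all } a\ne a^*.
\]

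To find the weights, set up a zero-sum game. Maximizer picks $a\ne a^*$, Minimizer picks $i$, and the payoff is $u_i(a)-u_i(a_i^*,a_{-i})$. If the value is negative, the minimax theorem gives $\gamma\in\Delta(N)$ with the desired strict inequality. Positivity of every $\gamma_i$ would then follow as in Theorem~\ref{th:IRCP}: plug in $a=(a_i,a_{-i}^*)$. The $j\neq i$ terms vanish because $a_j=a_j^*$, so we need $\gamma_i\bigl(u_i(a_i,a^*_{-i})-u_i(a^*)\bigr)<0$. This forces $\gamma_i>0$, and with it strictness of the Nash equilibrium. Strictness itself follows independently: if $a_i$ were a weak best response, then $\delta_{(a_i,a^*_{-i})}$ would be a second CCE.

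The main obstacle is showing that this value is negative. If it is $\ge 0$, there is $\nu$ supported on $A\setminus\{a^*\}$ with
\[
\sum_a\nu(a)\bigl(u_i(a)-u_i(a_i^*,a_{-i})\bigr)\ge 0 \quad\text{for all } i.
\]
That is, $\nu$ satisfies only the constraints for the deviation $a_i^*$, not for all deviations, so $\nu$ need not be a CCE. To handle this, mix: $\mu=(1-\varepsilon)\delta_{a^*}+\varepsilon\nu$. Take the constraint for player $i$ and a deviation $a_i'$.

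If $a_i'=a_i^*$, the $\delta_{a^*}$ part contributes $0$ and the $\nu$ part contributes $\ge 0$, so the constraint holds.

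If $a_i'\neq a_i^*$, the $\delta_{a^*}$ part contributes $(1-\varepsilon)\bigl(u_i(a^*)-u_i(a_i',a^*_{-i})\bigr)$. This is strictly positive by strictness. The $\nu$ part contributes only $O(\varepsilon)$, so the constraint holds for $\varepsilon$ small, since there are finitely many deviations.

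Hence $\mu$ is a CCE distinct from $\delta_{a^*}$, a contradiction. Therefore the value is negative, $H=(A,v)$ is an enforcement game with self-enforcing profile $a^*$, and $G$ is strategically equivalent to it. The one point to handle carefully is establishing strictness before the perturbation argument, via the simple pure-profile argument above.
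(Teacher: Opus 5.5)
Apart from one step, your argument is the paper's proof with the duality from Theorem~\ref{th:IRCP} inlined. The $\nu$ you obtain when the value is nonnegative is an $\IRCP$ of the transformed game with utilities $u_i(a)-u_i(a_i^*,a_{-i})$, supported off $a^*$. Your perturbation $(1-\varepsilon)\delta_{a^*}+\varepsilon\nu$ is exactly the paper's mixing step. The gap is the strictness of $a^*$, and that step carries real weight. Take a tied deviation $a_i'\neq a_i^*$ with $u_i(a_i',a_{-i}^*)=u_i(a^*)$. The $\delta_{a^*}$-part of that constraint contributes $0$, while $\sum_a\nu(a)\bigl(u_i(a)-u_i(a_i',a_{-i})\bigr)$ can have either sign, because $\nu$ only satisfies the constraints for deviations to $a_i^*$. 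You also cannot get strictness from $\gamma_i>0$, since that presupposes the value is negative.

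Your independent argument is invalid: a weak best response $a_i\neq a_i^*$ does not make $\delta_{(a_i,a_{-i}^*)}$ a second CCE. At $(a_i,a_{-i}^*)$ player $i$ has no profitable deviation. But each $j\neq i$ now faces $a_i$ instead of $a_i^*$, and $a_j^*$ need not be a best response to that. For example, take two players with $u_1(x_1,x_2)=u_1(y_1,x_2)=0$, $u_2(x_1,x_2)=0$, $u_2(x_1,y_2)=-1$, $u_2(y_1,x_2)=0$, and $u_2(y_1,y_2)=1$. Then $(x_1,x_2)$ is a Nash equilibrium with a tie for player~1, yet at $(y_1,x_2)$ player~2 profits by switching to $y_2$.

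Mixing $(1-\varepsilon)\delta_{a^*}+\varepsilon\delta_{(a_i,a_{-i}^*)}$ repairs this only when $i$ is the sole player with a tie at $a^*$. If some $j\neq i$ also has a tied deviation $a_j'$, that constraint has zero slack at $a^*$. It then reduces to $\varepsilon$ times $j$'s payoff difference between $a_j^*$ and $a_j'$ at the profile $(a_i,a_{-i}^*)$, which can be negative.

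That a unique pure CCE is strict is true but not elementary. The paper obtains it from Lemma~\ref{lm:CCEstrict} (a unique CCE is a quasi-strict Nash equilibrium), proved via strict complementary slackness in an auxiliary zero-sum game. Alternatively, note that $\CE(G)=\{\delta_{a^*}\}$ here and invoke quasi-strictness of a unique CE. If you substitute that lemma for your pure-profile argument, the rest of your proof goes through.
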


To get more intuition about what the strategic equivalence condition in Theorem~\ref{th:CCE} means, we express it in terms of the utilities of the original game $G=(A,u)$. This will help us with general structural insights as well as with applications.
To establish the equivalence, we must find weights $\gamma_i>0$ and additive perturbations $\beta_i(a_{-i})$ such that the game with utilities  $v_i(a_i,a_{-i})=\gamma_i\left(u_i(a_i, a_{-i})+\beta_i(a_{-i})\right)$ is an enforcement game. That is,
$u_i(a^*)+\beta_i(a_{-i}^*)=0$, $u_i(a_i^*,a_{-i})+\beta_i(a_{-i})\geq 0$, and $\sum_i\gamma_i \left(u_i(a)+\beta_i(a_{-i})\right)<0 $ for $a\ne a^*$. By taking the pointwise minimal $\beta_i$ permitted by the second condition, we satisfy the first condition automatically and only improve the last one. Thus it is without loss of generality to consider $\beta_i(a_{-i})=-u_i(a_i^*,a_{-i})$. Consequently, $G$ is strategically equivalent to an enforcement game if and only if the following inequality holds for some  weights $\gamma_i>0$ and all $a\ne a^*$
\begin{equation}\label{eq_CCE_check}
\Phi(a)<0\qquad \text{where}\qquad  \Phi(a)=\sum_i \gamma_i \Big(u_i(a_i,a_{-i})-u_i(a_i^*,a_{-i})\Big).
\end{equation}
Thus, checking the equivalence boils down to finding the weights $\gamma_i$. For symmetric games $G$, even this step is not needed, as one can always take $\gamma_i=1$ for all $i$. Indeed, if~\eqref{eq_CCE_check} is satisfied with some weights $\gamma_i$, the symmetry of the game implies that it is also satisfied with permuted weights. Averaging over all permutations, we conclude that it is satisfied by uniform weights.

The condition~\eqref{eq_CCE_check} can be interpreted as follows. Consider an action profile $a=(a_i,a_{-i}^*)$ corresponding to a unilateral deviation by player~$i$ from $a^*$. For each $j\ne i$, we have $u_j(a_j,a_{-j})-u_j(a_j^*,a_{-j})=0$ and, hence,
\begin{equation}\label{eq_potential}
    \Phi(a_i,a_{-i}^*)=\gamma_i\big(u_i(a_i,a_{-i}^*)-u_i(a^*)\big).
\end{equation}
That is, the value of $\Phi$ under a unilateral deviation is proportional to the change in the utility of the deviator. Accordingly, one can think of $\Phi$ as a potential representing deviations from $a^*$. We will refer to it as the \textbf{local potential} at $a^*$. If $a^*$ is a strict Nash equilibrium, the right-hand side of~\eqref{eq_potential} is strictly negative for $a_i\ne a_i^*$. Thus the inequality~\eqref{eq_CCE_check} is automatically satisfied at a strict Nash equilibrium for unilateral deviations $a=(a_i,a_{-i}^*)$.
Thus the gap between a strict Nash and a unique CCE is whether multilateral deviations lead to $\Phi<0$.
This can be seen as a requirement that multilateral deviations induce strong enough negative externalities.

\smallskip 

Comparing Theorems~\ref{th:IRCP} and~\ref{th:CCE} highlights an a priori unexpected connection between games with unique $\CCE$ and unique $\IRCP$. Indeed, a game has a unique pure $\CCE$ if and only if it is strategically equivalent to a game with a unique $\IRCP$.
Furthermore,~\eqref{eq_CCE_check} allows one to focus on a particular form of strategic equivalence. Thus $a^*$ is the unique CCE of $G=(A,u)$ if and only if $a^*$ is a unique $\IRCP$ in the game $G'=(A,u')$ with 
\begin{equation}\label{eq_CCE_to_IRCP_reduction}
u_i'(a)=u_i(a)-u_i(a_i^*,a_{-i}).
\end{equation} 
From a design perspective, this identity shows how one can modify payoffs in a game with a unique CCE to ensure that the equilibrium outcomes remain unchanged even if an intermediary can punish deviators.

The proof of the theorem is contained in Appendix~\ref{app_CCE}. 
There we  show  directly that $a^*$ is the unique CCE of $G=(A,u)$ if and only if it is a unique IRCP in the game $G'=(A,u')$ with $u'$ given in~\eqref{eq_CCE_to_IRCP_reduction}. 
Once this equivalence is established, Theorem~\ref{th:CCE} becomes a corollary of Theorem~\ref{th:IRCP}.

\smallskip

Theorem~\ref{th:CCE} implies robustness of games with unique CCE to small perturbations. Since in enforcement games Nash equilibria are strict, a unique pure $\CCE$ is strict as well. Furthermore, a small perturbation of a game with a unique CCE $a^*$ has the same unique CCE. Indeed, if $G=(A,u)$ satisfies condition~\eqref{eq_CCE_check} for some $\gamma$, the same strict inequality continues to hold for all games with utilities sufficiently close to $u$.
\begin{corollary}\label{cor_CCE_open}
The set of games with a unique pure $\CCE$ is open.
\end{corollary}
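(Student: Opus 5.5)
The plan is to prove openness by combining Theorem~\ref{th:CCE} with the explicit criterion~\eqref{eq_CCE_check}. We view a game as a point $u=(u_i(a))_{i\in N,a\in A}\in\R^{N\times A}$, with $N$ and $A$ fixed and the Euclidean topology on utilities.

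Let $G=(A,u)$ have a unique pure CCE $a^*$. By Theorem~\ref{th:CCE}, $G$ is strategically equivalent to an enforcement game. By the reduction preceding~\eqref{eq_CCE_check}, this means there are weights $\gamma_i>0$ such that $\Phi_u(a)<0$ for all $a\ne a^*$, where
$$\Phi_u(a)=\sum_i \gamma_i\big(u_i(a_i,a_{-i})-u_i(a_i^*,a_{-i})\big).$$
Since $A$ is finite, we can set $\varepsilon=\min_{a\ne a^*}(-\Phi_u(a))>0$.

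For fixed $\gamma$ and $a$, the map $u\mapsto \Phi_u(a)$ is linear and hence continuous. Explicitly, $|\Phi_{u'}(a)-\Phi_u(a)|\le 2\big(\sum_i\gamma_i\big)\|u'-u\|_\infty$. So if $\|u'-u\|_\infty<\varepsilon/(2\sum_i\gamma_i)$, then $\Phi_{u'}(a)<0$ for every $a\ne a^*$, with the same weights $\gamma$ and the same profile $a^*$.

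For such $u'$, set $\beta_i(a_{-i})=-u_i'(a_i^*,a_{-i})$ and $v_i=\gamma_i(u_i'+\beta_i)$. This game satisfies $v_i(a^*)=0$ and $v_i(a_i^*,\cdot)\equiv 0$, and $\sum_i v_i(a)=\Phi_{u'}(a)<0$ for $a\ne a^*$. Hence $v$ is an enforcement game strategically equivalent to $u'$. Applying Theorem~\ref{th:CCE} again, $G'=(A,u')$ has a unique CCE, and it is pure (namely $a^*$). Thus an open ball around $u$ lies in the set, which proves the corollary.

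There is no real obstacle here. The only point that needs care is that the reduction to~\eqref{eq_CCE_check} works in both directions: the specific choice $\beta_i=-u_i(a_i^*,\cdot)$ produces an enforcement game whenever $\Phi<0$. This is immediate from the definition of an enforcement game. A second point is that finiteness of $A$ is what gives the uniform margin $\varepsilon>0$.
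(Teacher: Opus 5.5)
Your proposal is correct and follows the paper's own argument. The paper also observes that a unique pure CCE $a^*$ yields fixed weights $\gamma$ satisfying the finite system of strict inequalities \eqref{eq_CCE_check}, and that these inequalities persist for all games with utilities close enough to $u$, so the perturbed game is again strategically equivalent to an enforcement game with the same $a^*$. Your explicit margin $\varepsilon$ and Lipschitz bound just spell out the continuity step the paper leaves implicit.
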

By the result of \cite*{viossat2008having}, the property of having a unique CE exhibits similar robustness to small perturbations. Interestingly, the situation for IRCP is more subtle. While there are games with unique IRCP that are robust to perturbations, there are those that are not. For example, by slightly perturbing an enforcement game $H=(A,v)$ such that all the inequalities in Definition~\ref{def:enforcement} are strict, we get a game affinely equivalent to an enforcement one and thus having a unique IRCP.
However, if $v_i(a_i^*,a_{-i})\equiv 0$, then any perturbation lowering $v_i(a_i^*,a_{-i})$ for some $a_{-i}\ne a_{-i}^*$ will result in a game with multiple IRCPs. The underlying reason is that, by lowering $v_i(a_i^*,a_{-i})$, we move the maximin level of player $i$ below $v_i(a^*)$ so that their individual rationality constraint no longer binds, opening up the possibility of other IRCPs described in the proof of Theorem~\ref{th:IRCP}.

\medskip

The extra flexibility offered by strategic equivalence leads to many more games with a unique CCE compared to those with a unique IRCP. 
For example, if $G=(A,u)$ admits a dominant action $a_i^*$ for each player, then $G$ is strategically equivalent to an enforcement game. Indeed, define $v_i(a)=u_i(a)-u_i(a_i^*,a_{-i})$. Then $v_i(a_i^*,a_{-i})\equiv 0 $ and $\sum_i v_i(a)<0$ for all $a\ne a^*$, since this inequality holds for each player $i$ not playing $a_i^*$ separately as $u_i(a_i^*,a_{-i})>u_i(a_i,a_{-i})$ by dominance.

 The class of games with a unique CCE includes many economically relevant environments beyond those with dominant actions. We illustrate this point with contests, games where players compete for a prize by exerting costly effort. Such models arise in rent seeking, political campaigns, litigation, R\&D races, and sporting competitions. In these settings, a natural concern is that players may
coordinate on low effort and effectively share the prize while saving costs. Uniqueness of the CCE is therefore especially appealing, as it removes the scope for such collusive behavior.

Since contests, as well as many other standard economic models, typically involve a continuum of actions, we begin by noting that the same condition~\eqref{eq_CCE_check} that guarantees uniqueness of $\CCE$ in finite games continues to do so for general games with bounded measurable utilities.

To see this, fix a game $G=(A,u)$ in which each action set $A_i$ is a measurable space and each utility $u_i\colon A\to \R$ is a bounded measurable function. Suppose that condition~\eqref{eq_CCE_check} holds for some $a^*\in A$. Plugging $a_{-i}=a_{-i}^*$ into~\eqref{eq_CCE_check} yields $u_i(a_i,a_{-i}^*)-u_i(a^*)<0$ for every player $i$ and every $a_i\neq a_i^*$. Hence, $a^*$ is a strict Nash equilibrium and, in particular, a pure CCE. To show uniqueness, let $\mu\in \Delta(A)$ be any distribution assigning positive mass to $A\setminus\{a^*\}$. Taking expectations of both sides of~\eqref{eq_CCE_check} with respect to $a\sim \mu$, we obtain $\E_{a\sim\mu}\!\left[\sum_{i\in N}\gamma_i\bigl(u_i(a)-u_i(a_i^*,a_{-i})\bigr)\right]<0$. Therefore, there exists a player $i$ such that $\E_{a\sim\mu}\!\left[u_i(a)-u_i(a_i^*,a_{-i})\right]<0$, which means that~$i$ can increase her expected utility by deviating~to $a_i^*$. Hence, $\mu$ is not a $\CCE$. Boundedness of $u$ in this argument is only used to ensure that expected utilities with respect to $\mu$ are well defined.
\smallskip

First, consider the classical Tullock contest, where each player chooses an  effort level $a_i> 0$ and receives the fraction of the prize equal to her relative effort, $\frac{a_i}{a_1+a_2}$. Assuming that the prize has value $1$ for both contestants and that effort costs are linear, utilities are
$$
u_i(a)=\frac{a_i}{a_1+a_2}-a_i.
$$
The effort profile $a^*=(1/4,1/4)$ is a Nash equilibrium. In fact, it is the unique CCE of this game. To see this, we substitute $u_i$ into~\eqref{eq_CCE_check} with 
$\gamma_i=1$, use that 
$\frac{a_1}{a_1+a_2}+\frac{a_2}{a_1+a_2}=1$ and obtain the following expression for the local potential
\begin{equation}\label{eq_Tulloc_inequality}
   \Phi(a)= \left(\frac{3}{4}-a_1-\frac{1}{1+4a_1}\right)
+\left(\frac{3}{4}-a_2-\frac{1}{1+4a_2}\right).
\end{equation}
Each parenthetical term attains its global maximum of $0$ at $a_i=a_i^*=1/4$, so $\Phi(a)<0$ for all $a\neq a^*$. We conclude that $a^*$ is indeed the unique CCE.

A plausible explanation for why the decentralized (Nash) outcome in the Tullock contest is unaffected by collusion is that the competition is not very intense and exhibits under-dissipation of rents: total equilibrium effort, $1/4+1/4$, is only half the prize value, so the scope for coordinated gains is limited. However, uniqueness of CCE persists even in contests with high rent dissipation. To see this, consider a version of the Tullock contest where $i\in \{1,2\}$ receives the share of the prize equal to 
\begin{equation}\label{eq_Tullloc_modified}
    \frac{(a_i)^r}{(a_1)^r+(a_2)^r},
\end{equation}
where $r> 0$ controls how discriminatory the contest is. The standard Tullock contest corresponds to $r=1$. For large $r$, the contest approaches a perfectly discriminating, winner-takes-all limit; for $r$ close to $0$, the success function becomes flat with weak incentives to exert effort. For $r\le 2$, this contest admits a strict Nash equilibrium $a^*=(r/4,r/4)$, so rent dissipation becomes asymptotically complete as $r$ approaches  $2$. A straightforward modification of the computation for the standard Tullock contest leading to~\eqref{eq_Tulloc_inequality}  shows that, for every $r\in(0,2]$, this Nash equilibrium is in fact the unique CCE.

The uniqueness of the CCE in contests turns out to be a remarkably robust phenomenon. Although the analysis above appears to rely on specific features of the Tullock contest, the conclusion is not tied to its functional form, symmetry, or linearity of costs.

Indeed, consider two players $i\in\{1,2\}$ who participate in a possibly asymmetric contest, exert efforts $a_i>0$, and may differ both in their valuation of the prize and in their effort costs. Let $p_i(a)$ denote the share of the prize awarded to player $i$ as a function of the effort profile $a$. The functions $p_1$ and $p_2$ are called contest success functions, and we assume that they satisfy $p_1(a)+p_2(a)=1$, so the entire prize is always allocated. Player $i$ values the prize at $v_i>0$ and incurs effort cost $c_i(a_i)$. Utilities are therefore given by
\begin{equation}\label{eq_assym_contest}
u_i(a)=v_i\, p_i(a)-c_i(a_i).
\end{equation}
The phenomenon observed in the modified Tullock contest~\eqref{eq_Tullloc_modified}---that the CCE is unique whenever the contest admits a strict Nash equilibrium---extends beyond that example.
\begin{proposition}\label{prop_contest}
    If a contest~\eqref{eq_assym_contest} admits a pure Nash equilibrium $a^*$ that is strict, then $a^*$ is the unique CCE.
\end{proposition}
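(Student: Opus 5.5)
We will verify the sufficient condition~\eqref{eq_CCE_check}, which (as shown above for games with bounded measurable utilities) implies that $a^*$ is the unique CCE. The weights will be $\gamma_i=1/v_i$, which normalize the prize to value $1$ for both players. With these weights the local potential is
$$
\Phi(a)=\sum_{i=1}^2\frac{1}{v_i}\Big(v_i p_i(a_i,a_{-i})-c_i(a_i)-v_i p_i(a_i^*,a_{-i})+c_i(a_i^*)\Big).
$$

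The key step is to use the constant-sum structure $p_1+p_2=1$ to decouple $\Phi$ into a sum of single-variable terms, in the same way the Tullock computation produced~\eqref{eq_Tulloc_inequality}. Since $p_1(a)+p_2(a)=1$, the terms $p_1(a_1,a_2)+p_2(a_1,a_2)$ cancel against a constant. The remaining prize terms are $-p_1(a_1^*,a_2)-p_2(a_1,a_2^*)$. Using $p_1(a_1^*,a_2)=1-p_2(a_1^*,a_2)$, we can write
$$
\Phi(a)=\Big[p_2(a_1^*,a_2)-\tfrac{c_2(a_2)-c_2(a_2^*)}{v_2}\Big]-\Big[p_2(a_1,a_2^*)+\tfrac{c_1(a_1)-c_1(a_1^*)}{v_1}\Big].
$$
In terms of the original utilities this is
$$
\Phi(a)=\frac{1}{v_2}\big(u_2(a_1^*,a_2)-u_2(a^*)\big)+\frac{1}{v_1}\big(u_1(a_1,a_2^*)-u_1(a^*)\big),
$$
after adding and subtracting $p_2(a^*)$ and $p_1(a^*)=1-p_2(a^*)$. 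Thus $\Phi(a)=\Phi(a_1,a_2^*)+\Phi(a_1^*,a_2)$ splits exactly into the two unilateral-deviation terms, each of the form~\eqref{eq_potential}. This algebra is the main step to check carefully.

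Strictness of $a^*$ then finishes the argument. For $a\ne a^*$, at least one player deviates, say $a_i\ne a_i^*$. That player's term $\frac{1}{v_i}(u_i(a_i,a_{-i}^*)-u_i(a^*))$ is strictly negative. The other player's term is either $0$ (if that player does not deviate) or strictly negative. Hence $\Phi(a)<0$ for all $a\ne a^*$, and uniqueness of the CCE follows.

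The only real obstacle is confirming that the decomposition holds without any assumption on the functional forms of $p_i$ or $c_i$. It relies solely on $p_1+p_2\equiv1$ and on the choice $\gamma_i=1/v_i$. We also need utilities to be bounded and measurable so that the expectation argument applies. If boundedness is not assumed, we should note that it is only used to make expectations well defined, so we restrict to distributions under which they exist. No mixed-equilibrium or duality argument is needed.
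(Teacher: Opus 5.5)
Your proof is correct and matches the paper's. The paper also takes $\gamma_i=1/v_i$, uses $p_1+p_2\equiv 1$ to show that the local potential $\Phi$ equals the weighted sum of the two unilateral-deviation terms $\frac{1}{v_i}\big(u_i(a_i,a_{-i}^*)-u_i(a^*)\big)$, and concludes $\Phi<0$ for $a\neq a^*$ from strictness; your explicit case split (one term zero, the other strictly negative) just spells out a step the paper states more loosely.
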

\begin{proof}
   Suppose $a^*$ is a strict pure Nash equilibrium. We show that the local potential $\Phi$ from \eqref{eq_CCE_check} with $\gamma_i=1/v_i$ is negative for $a\ne a^*$. We get
   $$\Phi(a)=\frac{1}{v_1}\Big(u_1(a_1,a_2)-u_1(a_1^*,a_2)\Big)+ \frac{1}{v_2}\Big(u_2(a_1,a_2)-u_2(a_1,a_2^*)\Big).$$
   Since, $a^*$ is a strict Nash equilibrium, $u_i(a_i,a_{-i}^*)<u_i(a^*)$ for all $a_i\ne a_i^*$. Weight these inequalities by $1/v_i$ and sum to obtain
$$ \frac{1}{v_1}\Big(u_1(a_1,a_2^*)-u_1(a^*)\Big)+ \frac{1}{v_2}\Big(u_2(a_1^*,a_2)-u_2(a^*)\Big)<0.$$
A convenient property of contests is that the left-hand side above coincides identically with $\Phi$. To see this, substitute the expression for $u_i$ from \eqref{eq_assym_contest} and use the identity $p_1+p_2=1$.
We conclude that $\Phi(a)<0$ for $a\ne a^*$, and thus $a^*$ is the only CCE.
\end{proof}

The proposition implies that in contests where players do not benefit from being unpredictable (so a pure Nash equilibrium exists), they also cannot collude to reduce effort. We stress that this result holds under fairly weak assumptions. The proof does not even use monotonicity of the effort costs or positivity of the prize shares. The only property we used is that $p_1+p_2\equiv \mathrm{const}$. One could even allow negative $p_i$, interpreted as each player owning some amount of the prize and transferring part of it to the other player as a function of the effort profile.

Another corollary of Proposition~\ref{prop_contest} is that, whenever a contest admits a strict Nash equilibrium $a^*$, there are no other pure or mixed Nash equilibria. Indeed, $a^*$ must be the only CCE and hence the only Nash equilibrium. This holds even in asymmetric contests and with asymmetric contestants, where multiple equilibria are especially natural to expect. 

The condition of having a unique pure Nash equilibrium can be expressed as an explicit condition on the contest success functions $p_i$. To illustrate this, we restrict attention to the family of symmetric ratio-based contests with linear costs. In such contests, the success function takes the form 
\begin{equation}\label{eq_ratio_contest}
p_1(a)=f\left(\frac{a_1}{a_2}\right)\qquad\text{and}\qquad p_2(a)=f\left(\frac{a_2}{a_1}\right)\qquad  \text{with }\ \  f\left(\frac{a_1}{a_2}\right)+f\left(\frac{a_2}{a_1}\right)=1.    
\end{equation}
For example, this family contains the modified Tullock contest~\eqref{eq_Tullloc_modified} as well as its winner-takes-all limit. Assume the prize value is $1$ for both contestants and effort cost equals effort. Then requiring $a^*=(c,c)$ with $0<c\leq 1/2$ to be a strict Nash equilibrium reduces to $f(a_1/c)-a_1<f(1)-c$ for $a_1\ne c$. By the identity $f(t)+f(1/t)=1$, the values of $f$ on $(0,1]$ determine its values on $[1,+\infty)$; moreover, $f(1)=1/2$. Writing $t=a_1/c$, this becomes $f(t)<1/2-c(1-t)$ for $t\in (0,1)\cup (1,+\infty)$. Rewriting the condition for $t>1$ as a condition on $(0,1)$ yields $1/2+c(1-1/t)<f(t)$.

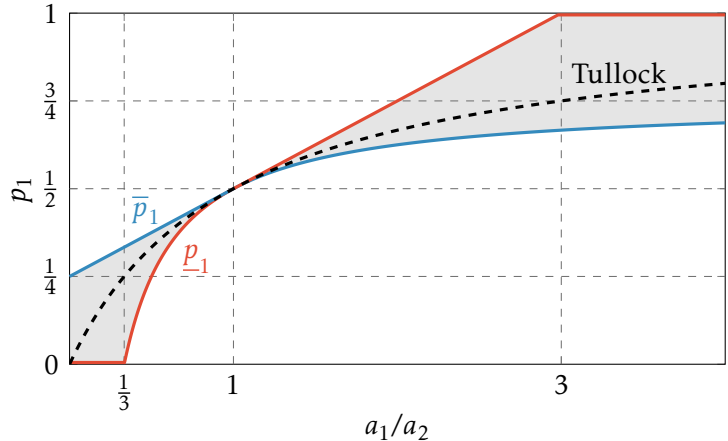
\begin{figure}[h]
\centering

% Set the right endpoint of the x-axis here:
\def\tmax{4} % e.g., 4, 5, 10, ...
\begin{tikzpicture}[scale=0.7]
\begin{axis}[
    width=0.85\linewidth,
    height=0.50\linewidth,
    xmin=0, xmax=\tmax,
    ymin=0, ymax=1,
    domain=0.001:\tmax, 
    samples=500,
    xlabel={$a_1/a_2$},
    ylabel={$p_1$},
    grid=none,
    minor tick num=0,
    xtick={1/3,1,3},
    xticklabels={$\frac{1}{3}$,1,3},
    ytick={0,1/4,1/2,3/4,1},
    yticklabels={$0$, $\frac{1}{4}$,$\frac{1}{2}$,$\frac{3}{4}$, $1$},
]

% Bounds for c=1/4, clipped to [0,1] for plotting:
%   Upper:  min(1, 1/4 + t/4)
%   Lower:  max(0, 3/4 - 1/(4t))

% Define paths (hidden) for shading
\addplot[name path=LB, draw=none] {max(0, 3/4 - 1/(4*x))};
\addplot[name path=UB, draw=none] {min(1, 1/4 + x/4)};

% Shade region between bounds
\addplot[draw=none, fill=black!40, fill opacity=0.25]
    fill between[of=LB and UB];
%\addlegendentry{Region between bounds}

% Plot bounds on top

\addplot[black!55, dashed, thin] coordinates {(1/3,0) (1/3,1)};
\addplot[black!55, dashed, thin] coordinates {(1,0)   (1,1)};
\addplot[black!55, dashed, thin] coordinates {(3,0)   (3,1)};

\addplot[black!55, dashed, thin] coordinates {(0,1/4) (\tmax,1/4)};
\addplot[black!55, dashed, thin] coordinates {(0,1/2) (\tmax,1/2)};
\addplot[black!55, dashed, thin] coordinates {(0,3/4) (\tmax,3/4)};

\addplot[very thick, color=myred, domain=0:1] {max(0.005, 3/4 - 1/(4*x))};
\addplot[very thick, color=myblue, domain=1:\tmax] {max(0, 3/4 - 1/(4*x))};
%\addlegendentry{Lower bound}

\addplot[very thick, color=myblue, domain=0:1] {1/4 + x/4};
\addplot[very thick, color=myred, domain=1:\tmax] {min(0.995, 1/4 + x/4)};
%\addlegendentry{Upper bound}

% Tullock success function f(t)=t/(1+t)
\addplot[very thick, dashed] {x/(1+x)};
%\addlegendentry{Tullock $f(t)=\frac{t}{1+t}$}

\node[text=myblue, anchor=west] at (axis cs:0.32,0.44) {$\overline{p}_1$};
\node[text=myred,  anchor=west] at (axis cs:0.63,0.3) {$\underline{p}_1$};
\node[text=black,  anchor=west] at (axis cs:3.0,0.83) {Tullock};
\end{axis}
\end{tikzpicture}

\caption{Success functions in ratio-based contests for which the symmetric pure Nash equilibrium is the unique pure CCE and the equilibrium effort equals $1/4$. The dashed curve is the Tullock contest. The red curve, $\underline{p}_1$, is the most meritocratic contest. The blue curve, $\overline{p}_1$, is the least discriminatory: each contestant receives a participation bonus of at least $1/4$ of the prize.
\label{fig_contest_range}
}
\end{figure}

Putting this together, a symmetric ratio-based contest~\eqref{eq_ratio_contest} has a strict Nash equilibrium $a^*=(c,c)$ with $c\in (0,1/2]$ (and thus $a^*$ is the unique CCE) if and only if
\begin{equation}\label{eq_condition_on_f}
    \frac{1}{2}+c\left(1-\frac{1}{t}\right)< f(t)< \frac{1}{2}-c\left(1-t\right) \qquad \text{for $t\in (0,1)$}.
\end{equation}
The upper and lower bounds are visualized in Figure~\ref{fig_contest_range} for $c=1/4$, the equilibrium effort in the standard Tullock contest. The bounds themselves correspond to ratio-based contests with success functions $\overline{p}_i$ and $\underline{p}_i$. For $c=1/4$, the success function of the first contestant is given by
$$\overline{p}_1(a)=\begin{cases}
    \frac{1}{4}+\frac{1}{4}\frac{a_1}{a_2}, & \frac{a_1}{a_2}\leq 1\\
    \frac{3}{4}-\frac{1}{4}\frac{a_2}{a_1}, & \frac{a_1}{a_2}\geq 1
\end{cases}\qquad\text{and}\qquad 
\underline{p}_1(a)=\begin{cases}
    0,& \frac{a_1}{a_2}\leq \frac{1}{3}\\
    \frac{3}{4}-\frac{1}{4}\frac{a_2}{a_1}, & \frac{1}{3}\leq \frac{a_1}{a_2}\leq 1\\
     \frac{1}{4}+\frac{1}{4}\frac{a_1}{a_2}, & 1\leq \frac{a_1}{a_2}\leq 3\\
        1,& \frac{a_1}{a_2}\geq 3
\end{cases}.$$
Condition~\eqref{eq_condition_on_f} means that the success function lies between those of $\overline{p}$ and $\underline{p}$. In the contest $\underline{p}$, the low-effort contestant receives the pointwise smallest share of the prize among all ratio-based contests admitting a pure Nash equilibrium. Conversely, in $\overline{p}$ this share is the largest among all such contests: each contestant receives at least $1/4$ of the prize, which can be interpreted as an effort-independent participation bonus. For example, the contest $p=(1-\varepsilon)\overline{p}+\varepsilon \underline{p}$ for small $\varepsilon>0$ preserves the equilibrium effort of the Tullock contest, has a unique CCE, and ensures that a low-effort player receives at least $\frac{1}{4}(1-\varepsilon)$ of the prize pointwise, which can be interpreted as an improvement in fairness.

\medskip

So far, we have focused on games whose unique CCE is {pure}. However, many games of interest---for example, highly competitive contests such as the Tullock contests~\eqref{eq_Tullloc_modified} with $r>2$---do not admit pure Nash equilibria. We now discuss what can be said about CCE uniqueness without assuming purity. As we will see, focusing on pure equilibria is (almost) without loss.

The first observation is that a unique CCE can indeed be mixed. Consider the classical matching pennies game. Its mixed Nash equilibrium is the unique CCE, which can be verified directly by solving the corresponding linear program. More generally, let $G$ be a $2\times 2$ game with a unique Nash equilibrium in which both players randomize. We will refer to such a game as \textbf{a game of matching-pennies type}. Up to relabeling of actions, every game of this type takes the form presented in Table~\ref{tab:counterexample}.
\begin{table}[h]
	\centering
	\begin{tabular}{c|c|c}
		& $a_2$&$b_2$ \\ \hline
		$a_1$&  $(a,e)$& $(b,f)$\\
		$b_1$&  $(c,g)$& $(d,h)$\\
	\end{tabular},\qquad where\qquad $a>c$,\ \ $d>b$,\ \  $f>e$,\ \ $g>h$
	\caption{A game of matching-pennies type has a unique CCE which is mixed.}
	\label{tab:counterexample}
\end{table}
%$$ MATRIX\qquad \text{with }\qquad INEQUALITIES$$

The mixed Nash equilibrium in a game of matching-pennies type is a unique CCE. While this can be shown directly, we give an argument that provides additional insight into the structure of~$\CCE$.

We first observe that $\CCE(G)=\CE(G)$ for all games $G$ where each player has only two actions. Indeed, consider a game $G=(A,u)$ with $n$ players and $|A_i|=2$ for all players $i$. It suffices to show that if $\mu\notin \CE(G)$, then $\mu\notin \CCE(G)$. Since $\mu$ is not a CE, there exists a player $i$ who benefits from deviating after some recommendation; say, when $a_i$ is recommended, she can increase her utility by switching to $b_i$. Because $A_i=\{a_i,b_i\}$, the player can then increase her expected utility by ignoring the recommendation and always playing $b_i$. Hence $\mu$ is not a CCE. Therefore $\CCE(G)\subseteq \CE(G)$, and since always $\CE(G)\subseteq \CCE(G)$, we conclude that the two sets are equal.

Correlated equilibria in $2\times 2$ games were classified by~\cite*{calvo2006}. In particular, he showed that games of matching-pennies type have a unique correlated equilibrium. Thus, for such $G$, the set $\CCE(G)=\CE(G)$ consists of the unique mixed Nash equilibrium. In fact, this $2\times 2$ randomization exhausts all games with a unique CCE that is mixed.
\begin{proposition}\label{prop_unique_CCE_mixed}
If $\CCE(G)$ is unique, then
\begin{itemize}
  \item either it is a pure Nash equilibrium,
  \item or exactly $2$ players mix over $2$ actions each, and the induced $2\times 2$ game is of matching-pennies type.
\end{itemize}
\end{proposition}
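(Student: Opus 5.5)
Suppose $\CCE(G)=\{\mu\}$. Since $\NE(G)\subseteq\CCE(G)$ and Nash equilibria exist, $\mu$ is the unique Nash equilibrium, so it is a product distribution $\nu=\nu_1\times\cdots\times\nu_n$. If every $\nu_i$ is a point mass, we are in the first case. Otherwise, let $M$ be the set of players with $|\supp\nu_i|\ge 2$. The plan is to show three things: $|M|=2$, each player in $M$ mixes over exactly two actions, and the induced $2\times2$ game is of matching-pennies type.

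The main tool is a perturbation argument on the CCE polytope. Because $\CCE(G)$ is a singleton, $\nu$ is a vertex. Equivalently, the linear system consisting of the binding CCE constraints together with $\sum_a\mu(a)=1$ and $\mu(a)=0$ for $a\notin\supp\nu$ has $\nu$ as its unique solution. Only the binding deviation constraints can matter, and there are few of them. For player $i$, the constraint for $a_i'$ binds at $\nu$ only if $a_i'$ is a best response to $\nu_{-i}$. The number of unknowns is $|\supp\nu|=\prod_i|\supp\nu_i|$. The number of independent binding constraints is at most $1+\sum_i(|\supp\nu_i|-1)+(\text{extra best responses outside the support})$.

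To handle the extra best responses, I would use a different argument. If some player $i$ has a best response $a_i'\notin\supp\nu_i$, then the mixture $(1-\varepsilon)\nu+\varepsilon\,\delta_{a_i'}\times\nu_{-i}$ should still be a CCE. This makes it a second element unless the inequalities are tight, so the case needs care. The cleaner route is to argue directly with deviations that redistribute weight while preserving all marginals' incentive sums.

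Concretely, I would construct a signed measure $\eta\ne0$ supported on $\supp\nu$ with $\sum_a\eta(a)=0$ such that for every $i$ and every $a_i'$ we have $\sum_a\eta(a)\bigl(u_i(a)-u_i(a_i',a_{-i})\bigr)=0$. Then $\nu\pm\varepsilon\eta$ would be a CCE, because non-binding constraints are strict and small perturbations preserve them. A sufficient condition is that $\eta$ has zero marginals on every $A_{-i}$ and that $\sum_a\eta(a)u_i(a)=0$. Zero marginals on $A_{-i}$ kill the $u_i(a_i',a_{-i})$ terms for all $a_i'$ simultaneously, which removes the dependence on the number of best responses.

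The dimension count for such $\eta$ then runs as follows.
\begin{itemize}
\item If three players mix, take an ``interaction'' signed measure $\eta=\bigotimes_{j\in M'}(\delta_{x_j}-\delta_{y_j})\times\nu_{-M'}$ for a set $M'$ of three mixing players. This measure has zero $A_{-i}$-marginals for every $i$, because each $i$ either lies outside $M'$ (and then the marginal on $A_{-i}$ still contains a factor $\delta_{x_j}-\delta_{y_j}$ summing to zero) or inside $M'$ (and then at least two such factors remain). Such $\eta$ form a space of dimension at least two, for instance by choosing different triples of pairs or rescaling. The $n$ conditions $\sum\eta u_i=0$ are too many in general, however.
\item So a better choice is to restrict to players: for $i\notin M'$ the factor structure gives $\sum\eta u_i=0$ automatically? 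Not automatically, so instead I would use the fact that each $i\in M'$ is indifferent on $\supp\nu_i$. This gives $\sum_a\eta(a)u_i(a)=0$ whenever $i$'s own coordinate appears as a difference factor and the others as differences too. That requires checking multilinearity against product structure.
\end{itemize}

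The main obstacle is exactly this bookkeeping: I need a family of $\eta$ large enough, compared with the number of constraints $\sum\eta u_i=0$ for $i\in M$ (at most $|M|$ conditions, and zero for non-mixing players if their factor is a point mass — indeed for $i\notin M$, $u_i$ depends on differences multilinearly). The goal is to show nontriviality whenever $|M|\ge3$ or some $|\supp\nu_i|\ge3$.

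In the remaining case, $|M|=2$ with two actions each. If $|M|=1$, the single mixing player is indifferent, so any of their support actions yields a pure CCE, which contradicts uniqueness. The CCE restricted to profiles where the other players play their pure actions is then a CCE of the $2\times2$ game. Its uniqueness forces, via the equality $\CCE=\CE$ for two-action games and the classification of \cite{calvo2006}, the matching-pennies sign pattern of Table~\ref{tab:counterexample}. Here I need to ensure that a non-matching-pennies $2\times2$ game with fully mixed equilibrium has another CE, which in turn lifts to another CCE of $G$ since the players outside $M$ are strict (or at least weak) best responders.
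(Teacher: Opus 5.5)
Your overall framework (treat $\nu$ as a vertex of $\CCE(G)$, count binding constraints on $\Delta(\supp\nu)$, then perturb) matches the paper's. There is a genuine gap, however. You never establish that a unique CCE is a \emph{quasi-strict} Nash equilibrium (Lemma~\ref{lm:CCEstrict}, which the paper proves by strict complementary slackness), and every step of the argument needs it.

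Quasi-strictness does two jobs in the paper:
\begin{itemize}
\item It ensures that on $\Delta(\supp\nu)$ the only binding constraints are deviations $a_i'\in\supp\nu_i$ by mixing players. For a pure player, the only binding constraint is $a_i'=a_i^*$, and it is vacuous there. This is what yields the count $\prod_i k_i\le 1+\sum_i k_i$.
\item It keeps every small perturbation inside $\supp\nu$ within $\CCE(G)$. This is used to rule out a single mixer, to rule out the $2\times3$ case, and to lift a second equilibrium $\eta$ of the induced $2\times2$ game via $((1-\varepsilon)\nu_{12}+\varepsilon\eta)\times\nu_{-12}$.
\end{itemize}
Without quasi-strictness, out-of-support best responses can add arbitrarily many binding constraints, for example a pure player with many alternative best responses.

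Your suggested repairs do not close this:
\begin{itemize}
\item The perturbation $(1-\varepsilon)\nu+\varepsilon\,\delta_{a_i'}\times\nu_{-i}$ changes $i$'s marginal, and so can break other players' binding constraints.
\item ``Weak best responders'' is not enough for the lifting step.
\item For $|M|=1$, replacing $\nu_1$ by a pure support action is a large change. It need not keep the pure players best-responding.
\end{itemize}

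The device meant to bypass quasi-strictness also fails:
\begin{itemize}
\item If any player $i$ plays a pure action, the $A_{-i}$-marginal of a signed measure supported on $\supp\nu$ is the measure itself. Zero marginals then force $\eta=0$.
\item Even if everyone mixes, $\bigotimes_{j\in M'}(\delta_{x_j}-\delta_{y_j})\times\nu_{-M'}$ has a nonzero $A_{-i}$-marginal for $i\notin M'$. It only has total mass zero.
\item For $i\in M'$, the quantity $\sum_a\eta(a)u_i(a)$ is a multilinear interaction term of $u_i$. Indifference against the product $\nu_{-i}$ does not make it vanish.
\item Measures with all marginals zero form a space of dimension $\prod_j(k_j-1)$. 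For $2\times2\times2$ mixing this is $1$, against three conditions $\sum_a\eta(a)u_i(a)=0$.
\end{itemize}

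Your count of $|\supp\nu_i|-1$ binding constraints per mixing player is also wrong. In matching pennies it would give at most $3$ independent equations in $4$ unknowns, so the unique CCE could not be a vertex. With the correct count ($|\supp\nu_i|$ each), the $\{2,3\}$ case survives the count and needs a separate argument. The paper uses Vorob'ev's decomposition of a $2\times3$ equilibrium into $2\times2$ ones.

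Since you leave the exclusion of $|M|\ge3$ and $|\supp\nu_i|\ge3$ as unresolved ``bookkeeping,'' the core of the proposition is not proved.
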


By this proposition, any unique CCE is ``almost pure,'' which justifies our focus on pure $\CCE$ in the preceding discussion. Moreover, the $2\times 2$ randomization case is excluded in symmetric environments since no game of matching-pennies type is symmetric.
\begin{corollary}
If $G$ is symmetric and $\CCE(G)$ is unique, then it is a pure Nash equilibrium.
\end{corollary}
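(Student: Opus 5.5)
The statement to prove is the last Corollary: if $G$ is symmetric and $\CCE(G)$ is a singleton, then it is a pure Nash equilibrium. I will reduce this to Proposition~\ref{prop_unique_CCE_mixed}, so the plan is to exclude its second alternative by symmetry.

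Suppose the unique CCE $\mu$ is not pure. By Proposition~\ref{prop_unique_CCE_mixed}, exactly two players, say $i$ and $j$, randomize, each over two actions. Every other player $k$ plays a fixed pure action $a_k^*$. The restriction of $G$ to these two players is of matching-pennies type (Table~\ref{tab:counterexample}), and $\mu$ is the product of its mixed equilibrium with the point masses at the $a_k^*$.

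Next I use symmetry. A symmetric game satisfies $u_{\pi(i)}(a\circ\pi^{-1})=u_i(a)$ for every permutation $\pi$ of players, with all action sets equal. Hence $\CCE(G)$ is invariant under permuting coordinates, so the unique CCE $\mu$ is itself permutation invariant. In particular, its marginals all coincide. Players $i$ and $j$ have non-degenerate marginals, while each other player $k$ has a point-mass marginal. So invariance is only possible if $n=2$, and then both players mix over the same two-element support $\{x,y\}$.

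Consider $n=2$ with the restriction to $\{x,y\}^2$. Symmetry gives $u_2(s,t)=u_1(t,s)$. Label the actions $a_1=a_2=x$ and $b_1=b_2=y$ in Table~\ref{tab:counterexample}; symmetry then forces $e=a$, $h=d$, $g=b$ and $f=c$. The matching-pennies inequalities $a>c$ and $f>e$ become $a>c$ and $c>a$, a contradiction. So the mixed case cannot arise, and the unique CCE is a pure Nash equilibrium by the first alternative of Proposition~\ref{prop_unique_CCE_mixed}.

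The step I expect to need the most care is the reduction to $n=2$ with a common support, which relies on uniqueness together with invariance of $\CCE(G)$ under the symmetry group. Once that is in place, the contradiction is a one-line check on the sign pattern of Table~\ref{tab:counterexample}. One small point to verify is that labeling the actions as above, identically for both players, is compatible with the form of the table.
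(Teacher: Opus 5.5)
Your proposal follows the paper's route: reduce to Proposition~\ref{prop_unique_CCE_mixed}, then rule out the matching-pennies case by symmetry. The paper does this in one line, asserting that no game of matching-pennies type is symmetric.

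Your invariance step improves on that line. The paper tacitly needs the induced $2\times 2$ game to be symmetric, and this fails when the two mixing players use different supports in the common action set. Your observation closes this case. Since $\CCE(G)$ is invariant under permutations of players, a unique CCE has identical marginals, which forces $n=2$ and a common support $\{x,y\}$.

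The labeling point you flag is genuine and needs a line. Table~\ref{tab:counterexample} is valid only up to relabeling, and nothing guarantees that the relabeling is the same for both players. It is easily closed:

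\begin{itemize}
\item The four labelings reduce to two cases by renaming $x\leftrightarrow y$ for both players simultaneously. One case is yours ($a_1=a_2=x$).
\item The other is the crossed labeling $a_1=y$, $b_1=x$, $a_2=x$, $b_2=y$. There $u_2(s,t)=u_1(t,s)$ gives $e=d$, $f=b$, $g=c$, $h=a$. So $f>e$ contradicts $d>b$.
\end{itemize}

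Alternatively, you can bypass the table. Symmetry means both players have the same best-response correspondence on $\{x,y\}$. Either some action is a best response to itself, giving a symmetric pure equilibrium, or each action's unique best response is the other action, in which case $(x,y)$ is a pure equilibrium. Either way this contradicts the matching-pennies requirement that the unique equilibrium be fully mixed.
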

Thus, in many environments of interest, the focus on pure Nash equilibria when considering  a unique $\CCE$ is without loss of generality.

\smallskip
The proof of Proposition~\ref{prop_unique_CCE_mixed} is substantially more subtle than the proof of Proposition~\ref{pr:IRCPpure} establishing purity of the unique IRCP. Our argument hinges on an observation about extreme points of $\CCE(G)$, which applies beyond games for which $\CCE(G)$ is a singleton and may be of independent interest.

Since $\CCE(G)$ is defined by a system of linear inequalities, it is convex. A point of a convex set is \textbf{extreme} if it cannot be expressed as a convex combination of other points in the set.
In the finite games we consider, $\CCE(G)$ is a polytope in a finite-dimensional space, and its extreme points are simply its vertices.

Extreme points are relevant for uniqueness since, when $\CCE(G)$ is a singleton, its unique element is a Nash equilibrium and (trivially) extreme, because the polytope degenerates to a single point.\footnote{Another reason for the importance of extreme points is the Bauer maximum principle. It states that any linear or convex objective attains its optimum at an extreme point. Thus non-extreme points can effectively be ignored by a designer. This perspective has been useful in a range of economic theory questions; see the literature review.}
We characterize which Nash equilibria are extreme points of the CCE polytope. A mixed Nash equilibrium $\nu=\nu_1\times\ldots \times \nu_n$ is \textbf{quasi-strict} if all incentive constraints outside its support are slack, that is, $\sum_a \nu(a) u_i(a)> \sum_a \nu(a) u_i(b_i,a_{-i})$ for all $i$ and all $b_i\notin \supp(\nu_i)$. By \cite*{harsanyi1973oddness}, a generic game has finitely many Nash equilibria, and all of them are quasi-strict.
\begin{proposition}
    \label{prop:extremality}
A quasi-strict Nash equilibrium is an extreme point of $\CCE(G)$ if and only if it is pure, or exactly two players randomize and each randomizes over two actions.
\end{proposition}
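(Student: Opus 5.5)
The plan is to characterize extremality locally, through the constraints of the $\CCE$ polytope that bind at $\nu$. Let $S_i=\supp(\nu_i)$, $k_i=|S_i|$, $S=\times_i S_i$, and let $M=\{i: k_i\ge 2\}$ be the set of randomizing players.

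\textbf{Step 1: reduce extremality to a linear system.}
\begin{itemize}
\item Since $\nu$ has full support on $S$, the nonnegativity constraints that bind at $\nu$ are exactly $\mu(a)=0$ for $a\notin S$.
\item By quasi-strictness, the $\CCE$ inequalities for deviations $b_i\notin S_i$ are slack, so they survive small perturbations.
\item The inequalities for $b_i\in S_i$ bind, because player $i$ is indifferent over $S_i$.
\end{itemize}
Hence $\nu$ is \emph{not} extreme if and only if there is a nonzero signed measure $d$ on $S$ with $\sum_a d(a)=0$ satisfying
$$\sum_{a\in S} d(a)\bigl(u_i(a)-u_i(b_i,a_{-i})\bigr)=0\qquad\text{for all } i\in M,\ b_i\in S_i .$$
In that case $\nu\pm\varepsilon d\in\CCE(G)$ for small $\varepsilon>0$. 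The constraints of non-randomizing players are trivial, since for them $b_i=a_i^*$.

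\textbf{Step 2: non-extremality in the excluded cases.} Rather than count raw constraints, I would build $d$ from nearby Nash equilibria, so that the \CCE\ conditions are automatic.
\begin{itemize}
\item \emph{Exactly one player $j$ randomizes.} Take any $\eta\ne 0$ on $S_j$ with $\sum\eta=0$ and set $d=\eta\times\delta_{a^*_{-j}}$. Player $j$ is indifferent over $S_j$. Every other player faces $\nu_j\pm\varepsilon\eta$ in place of $\nu_j$, and her strict preference for her action (quasi-strictness) survives small $\varepsilon$. So $\nu\pm\varepsilon d$ are Nash equilibria.
\item \emph{$|M|\ge 3$, or $|M|=2$ with some $k_j\ge 3$.} Perturb one mixing player's strategy, $\nu_j\mapsto\nu_j+\varepsilon\eta$ with $\eta$ supported on $S_j$ and $\sum\eta=0$. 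Require that, for each other mixing player $i$, the payoff differences among the actions in $S_i$ are unchanged. This imposes $\sum_{i\in M\setminus\{j\}}(k_i-1)$ linear conditions on an $\eta$-space of dimension $k_j-1$. When $|M|=2$ and $k_j\ge 3$ with $k_i=2$, there is $1$ condition on a space of dimension $\ge 2$, so a solution exists. The perturbed product is again a Nash equilibrium with the same supports, and $\nu$ is the midpoint of two such equilibria.
\item \emph{General case with many players or large supports.} The single-player perturbation may be over-constrained. I would then use genuinely correlated directions: $d$ in the interaction space on $\times_{i\in M}S_i$, i.e.\ all marginals on $S_{-i}$ vanish for every $i\in M$. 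Each term $\sum_a d(a)\,u_i(b_i,a_{-i})$ then vanishes, so the system collapses to the $|M|$ equations $\sum_a d(a)u_i(a)=0$. The interaction space has dimension $\prod_{i\in M}(k_i-1)$. A mixture of the two devices (zero marginals for some players, product perturbation for others) should cover every case except $|M|=2$ with $k=(2,2)$. Settling all the boundary cases, such as $|M|=3$ with all $k_i=2$, where the dimension count is tight, is the step I expect to be the main obstacle. I would first try, for each such case, choosing $d$ with zero marginals except for one player and checking that the remaining system is rank-deficient.
\end{itemize}

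\textbf{Step 3: extremality in the allowed cases.}
\begin{itemize}
\item \emph{Pure $\nu$.} Here $\nu$ is a vertex of $\Delta(A)$, hence extreme.
\item \emph{Two players, each randomizing over two actions.} The $d$-space has dimension $3$, coordinatized by a $2\times 2$ table with zero total. Each player contributes one nontrivial equation, because the two equations for $b_i$ and $b_i'$ differ by a multiple of the other, given indifference. Writing the equations explicitly, and using that indifference of both players forces the payoff differences in Table~\ref{tab:counterexample} form (matching-pennies type on the support), I would show that the resulting system, together with the restriction to $S$, forces $d=0$. Equivalently, the face of $\CCE(G)$ supported on $S$ is the $\CE$ of a $2\times2$ matching-pennies-type game. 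That set is a singleton by the earlier observation $\CCE=\CE$ for binary action sets together with \cite{calvo2006}. The remaining players are pure and strict on that face by quasi-strictness.
\end{itemize}
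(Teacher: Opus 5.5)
Your Step~1 reduction is correct; it is the paper's active-constraint setup. So are the pure case, the one-mixer case, and the product perturbation for unequal supports, which is a nice elementary substitute for the paper's appeal to \cite{vorob1958equilibrium} in the $\{2,3\}$ case. There are, however, two genuine gaps.

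First, the ``only if'' direction is left unfinished, and your structured directions do not in general finish it. When $|M|\ge 3$ and all $k_i=2$, the zero-marginal space is one-dimensional against $|M|$ equations. A one-player perturbation has one degree of freedom against $|M|-1\ge 2$ conditions. The missing idea is exactly the raw count you set aside. Your own Step~1 system has $\prod_{i\in M}k_i$ unknowns and at most $1+\sum_{i\in M}k_i$ equations: the normalization $\sum_a d(a)=0$ plus $k_i$ equations per mixing player. So a nonzero $d$ exists whenever $\prod_{i\in M}k_i>1+\sum_{i\in M}k_i$. By Lemma~\ref{lem:combinatorics}, this covers every case except $|M|\le 1$ and $\{k_1,k_2\}\in\{\{2,2\},\{2,3\}\}$. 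This is precisely the paper's argument via \cite{winkler1988extreme}. Combined with your one-mixer and unequal-support devices, it closes the ``only if'' direction.

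Second, the $2\times2$ extremality argument in Step~3 is wrong on two counts. Write $S_i=\{x,y\}$ and $D_i(a_{-i})=u_i(x,a_{-i})-u_i(y,a_{-i})$ for $a_{-i}\in S_{-i}$.
\begin{itemize}
\item \emph{The equation count.} Player $i$'s equations for $b_i=y$ and $b_i=x$ read $\sum_{a_{-i}}d(x,a_{-i})D_i(a_{-i})=0$ and $\sum_{a_{-i}}d(y,a_{-i})D_i(a_{-i})=0$. They involve disjoint coordinates, so they are not multiples of each other. If each player contributed only one equation, you would have two equations on a three-dimensional space. A nonzero $d$ would then exist, which is the opposite of what you want.
\item \emph{The matching-pennies claim.} Indifference does not force the induced game to be of matching-pennies type. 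In battle of the sexes (take $G=G'$), the mixed equilibrium is vacuously quasi-strict and is extreme. Yet $\CE(G')$ contains both pure equilibria, so an appeal to uniqueness of $\CE(G')$ cannot work. What you need is that $\nu$ is a \emph{vertex} of $\CE(G')$, which the paper takes from \cite{calvo2006}.
\end{itemize}
You can also prove the vertex property directly. Since $\sum_{a_{-i}}\nu_{-i}(a_{-i})D_i(a_{-i})=0$ on the two-point set $S_{-i}$, if $D_i\not\equiv 0$ then player $i$'s two equations say that $d(a_i,\cdot)$ is proportional to $\nu_{-i}$ for each $a_i$. Doing this for both players gives $d=\kappa\,\nu_1\times\nu_2$, and $\sum_a d(a)=0$ forces $\kappa=0$. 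This step needs $D_1,D_2\not\equiv 0$. If one player is indifferent across the whole support, $\nu$ is not extreme, so some nondegeneracy is implicitly required here, as it also is in the paper's appeal to \cite{calvo2006}.
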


The proof is in Appendix~\ref{app_extreme}. Here we explain why more extensive randomization is incompatible with extremality.\footnote{A similar tension between extremality and the amount of randomness is observed by \cite*{rudov2025extreme} for Nash equilibria in $\CE$, rather than $\CCE$. Under an additional regularity assumption on the Nash equilibrium, they show that two-player randomization over any two equal numbers of pure actions can be compatible with extremality, whereas three-player randomization cannot.} The idea is to compare the number of binding constraints at a Nash equilibrium in the CCE polytope to the dimension of the space in which the equilibrium lives. By \cite*{winkler1988extreme}, if $k$ linear constraints are imposed on the set of all probability distributions, then any extreme distribution is supported on at most $k+1$ points. Now consider a Nash equilibrium in which $m\leq n$ players randomize. Without loss of generality, these are players $1,\ldots, m$, and player~$i$ mixes over $k_i\geq 2$ actions. The support then has size $k_1\cdot \ldots \cdot k_m$. Quasi-strictness implies that the number of active constraints is at most $k_1+\ldots + k_m$ (players who do not randomize contribute none). Thus a Nash equilibrium can only be extreme if
$$\prod_{i=1}^m k_i\leq 1+\sum_{i=1}^m k_i.$$
Lemma~\ref{lem:combinatorics} in the appendix shows that this inequality can only hold under $2\times 2$ or $2\times 3$ randomization. The latter case does not correspond to extreme equilibria since, since a $2\times 3$ Nash equilibrium must be a convex combination of $2\times 2$ ones \citep{vorob1958equilibrium}.
\smallskip

The classification of unique $\CCE$ in Proposition~\ref{prop_unique_CCE_mixed} follows directly from Proposition~\ref{prop:extremality} and the fact that a unique $\CCE$ must be a quasi-strict Nash equilibrium. This can be deduced from the quasi-strictness of unique $\CE$ established by \cite*{viossat2008having} or proved directly; see Lemma~\ref{lm:CCEstrict} in the appendix.
\begin{proof}[Proof of Proposition~\ref{prop_unique_CCE_mixed}]
    Let $G$ be a game such that $\CCE(G)=\{\nu\}$. By Lemma~\ref{lm:CCEstrict}, $\nu$ is a quasi-strict Nash equilibrium. Since $\CCE(G)$ is a singleton, $\nu$ is an extreme point of $\CCE(G)$. By Proposition~\ref{prop:extremality}, $\nu$ is either pure or involves two players mixing over two actions.

    It remains to show that, in the $2\times 2$ mixing case, the induced $2\times 2$ game $G'$ is of matching-pennies type. Without loss of generality, the mixing players are $i\in\{1,2\}$. Towards contradiction, suppose that $G'$ is not of matching-pennies type. Then, besides the mixed Nash equilibrium $\nu_{12}=\nu_1\times \nu_2$, it has another Nash equilibrium $\eta$. Define
    $\mu=\big((1-\varepsilon)\nu_{12}+\varepsilon \eta\big)\times \nu_{-12}$.
    For $\varepsilon>0$ small enough, $\mu\in \CCE(G)$. Indeed, the incentive constraints for players $1$ and $2$ are satisfied since $\nu_{12}$ and $\eta$ are Nash equilibria of $G'$, and all remaining constraints are slack at $\nu$ by quasi-strictness and therefore remain satisfied for small $\varepsilon$. Since $\mu\neq \nu$, this contradicts the uniqueness of $\CCE$. Hence $G'$ must be of matching-pennies type.
\end{proof}

\medskip

We conclude this section by discussing convexity-based conditions that guarantee a unique $\CCE$. We start with the class of \textbf{strictly socially concave games} introduced by \cite*{even2009convergence}. A game $G=(A,u)$ belongs to this class if each $A_i$ is a compact, convex subset of $\R^d$, the utilitarian welfare $\sum_i u_i(a)$ is strictly concave in $a\in A$, and for every player $i$ the payoff $u_i(a_i,a_{-i})$ is continuous in $a$ and convex in $a_{-i}$.

Any such game is contained in the class of concave games by~\cite*{rosen1965existence}  and thus admits a pure Nash equilibrium $a^*$. As demonstrated by \cite*{hart2015markets}, $a^*$ is the unique $\CCE$. We show that this easily follows from our main result.
\begin{corollary}\label{cor:socially_concave_CCE}
A strictly socially concave game has a unique $\CCE$.
\end{corollary}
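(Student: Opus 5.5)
The plan is to verify condition~\eqref{eq_CCE_check} with uniform weights $\gamma_i=1$ at a pure Nash equilibrium $a^*$. Recall that the paper has already shown that this condition implies uniqueness of $\CCE$ for games with bounded measurable utilities. Boundedness holds here because each $u_i$ is continuous on the compact set $A$. Existence of a pure Nash equilibrium $a^*$ comes from Rosen's theorem, as noted just before the corollary. So it suffices to show that
\[
\Phi(a)=\sum_i \Big(u_i(a_i,a_{-i})-u_i(a_i^*,a_{-i})\Big)<0 \qquad\text{for all } a\ne a^*.
\]

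The key step is to bound the subtracted terms using convexity of $u_i$ in $a_{-i}$. Fix $a\ne a^*$ and consider the segment $a(t)=a^*+t(a-a^*)$, $t\in[0,1]$, which stays in $A$ by convexity. Define $W(a)=\sum_i u_i(a)$ and
\[
\psi(t)=\sum_i u_i\big(a_i^*,a_{-i}(t)\big).
\]
Each summand of $\psi$ is convex in $t$, since only the opponents' coordinates move. Each $u_i$ is also concave in its own action: $u_i(\cdot,a_{-i})=W(\cdot,a_{-i})-\sum_{j\ne i}u_j(\cdot,a_{-i})$, where the first term is strictly concave and the subtracted terms are convex in $a_i$, because $a_i$ is part of $a_{-j}$.

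A cleaner route avoids derivatives and uses a discrete averaging argument. By convexity of $u_i$ in $a_{-i}$, we have $u_i(a_i^*,a_{-i})\ge 2u_i(a_i^*,\tfrac{a_{-i}+a^*_{-i}}{2})-u_i(a^*)$. Write $m=\tfrac{a+a^*}{2}$. Then
\[
\Phi(a)\le W(a)-2\sum_i u_i(a_i^*,m_{-i})+W(a^*).
\]
By strict concavity of $W$, $W(a)+W(a^*)<2W(m)$, so
\[
\Phi(a)<2\sum_i\big(u_i(m)-u_i(a_i^*,m_{-i})\big)=2\Phi(m).
\]
Iterating this along $a^{(k)}=a^*+2^{-k}(a-a^*)$ gives $\Phi(a)<2^k\Phi(a^{(k)})$ for all $k$.

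The main obstacle is closing this iteration: we need $2^k\Phi(a^{(k)})\le 0$ in the limit. Since $a^*$ is a Nash equilibrium and $u_i$ is concave in $a_i$, I expect
\[
\Phi(a^{(k)})\le \sum_i\Big(u_i(a^{(k)})-u_i(a_i^*,a^{(k)}_{-i})\Big),
\]
with the $i$-th term at most roughly a directional derivative of $u_i$ in its own action times $2^{-k}$. That directional derivative is evaluated at $a^{(k)}_{-i}$ rather than at $a^*_{-i}$, which is where difficulty arises without differentiability. To avoid limits altogether, I would use a one-shot convexity bound. For fixed $a_{-i}$, concavity of $u_i(\cdot,a_{-i})$ gives
\[
u_i(a_i,a_{-i})-u_i(a_i^*,a_{-i})\le \frac{1}{s}\Big(u_i\big(a_i^*+s(a_i-a_i^*),a_{-i}\big)-u_i(a_i^*,a_{-i})\Big).
\]
If that route also stalls, I would take the easier fallback: combine the halving inequality with the observation that $\Phi\le 0$ near $a^*$, via the Nash property plus continuity. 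That suffices because the strict inequality is already gained at the first step.
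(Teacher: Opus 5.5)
Your reduction matches the paper's: verify \eqref{eq_CCE_check} with $\gamma_i=1$ at the Rosen equilibrium $a^*$. Your halving inequality $\Phi(a)<2\Phi(m)$ is also correct. It is simply strict concavity of $\Phi=W-\sum_i u_i(a_i^*,a_{-i})$ along the segment $[a^*,a]$, combined with $\Phi(a^*)=0$.

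The gap is that the argument stops where the actual work begins. Iterating only gives $\Phi(a)<2^k\Phi(a^{(k)})$. You therefore still need $\limsup_k 2^k\Phi(a^{(k)})\le 0$, i.e., that the one-sided directional derivative of $\Phi$ at $a^*$ in direction $a-a^*$ is nonpositive. Nothing in the proposal establishes this:
\begin{itemize}
\item The Nash property controls $\Phi$ only on unilateral slices, where $\Phi(a_i,a^*_{-i})=u_i(a_i,a^*_{-i})-u_i(a^*)\le 0$.
\item Your ``one-shot'' concavity bound is evaluated at $a_{-i}$, not $a^*_{-i}$, so it never reaches the Nash inequality.
\item The fallback ``$\Phi\le 0$ near $a^*$ by the Nash property plus continuity'' is invalid. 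Continuity only gives $\Phi\approx 0$ near $a^*$. Moreover, since $\Phi$ is concave with $\Phi(a^*)=0$, being $\le 0$ near $a^*$ in every direction is equivalent to being $\le 0$ everywhere. The fallback therefore assumes the conclusion.
\end{itemize}

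The missing step is the passage from blockwise to joint optimality of $\Phi$ at $a^*$. The paper obtains it from strict concavity plus blockwise maximality. With differentiable utilities, the blockwise first-order conditions read $\nabla_i\Phi(a^*)\cdot(a_i-a_i^*)=\nabla_i u_i(a^*)\cdot(a_i-a_i^*)\le 0$. Because $A$ is a product of convex sets, these sum to $\nabla\Phi(a^*)\cdot(a-a^*)\le 0$. That is exactly $\lim_k 2^k\Phi(a^{(k)})\le 0$, which closes your iteration.

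This step is not a formality. Directional derivatives of concave functions are superadditive, not subadditive. For example, $-|x-y|-\varepsilon(x+y)-\delta(x^2+y^2)$ with $0<\varepsilon<1$, $\delta>0$ is strictly concave and maximized at the origin along both axes, yet it is positive on the diagonal just below the origin. So without differentiability you must extract the needed inequality from the game's structure.

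If you want to avoid derivatives entirely, center the potential at the mean of $\mu$ instead of at $a^*$. Let $\mu$ be nondegenerate with mean $\bar a\in A$. Jensen in $a_{-i}$ and strict Jensen for $W$ give
\[
\sum_i \E_\mu[u_i(\bar a_i,a_{-i})]\ge W(\bar a)>\E_\mu[W(a)]=\sum_i\E_\mu[u_i(a)].
\]
Hence some player gains by deviating to $\bar a_i$, and every CCE is a pure Nash equilibrium. Finally, two distinct pure equilibria would make their even mixture a nondegenerate CCE, so the CCE is unique.
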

Indeed, fix a strictly socially concave game $G=(A,u)$ with a Nash equilibrium $a^*$. To show that the $\CCE$ is unique, we verify that  the local potential $\Phi$ from~\eqref{eq_CCE_check} with $\gamma_i=1$ satisfies $\Phi(a)<0$ for $a\ne a^*$. 
By~\eqref{eq_potential},  $\Phi(a_i,a_{-i}^*)=u_i(a_i,a_{-i}^*)-u_i(a^*)$. Since $a^*$ is a Nash equilibrium, $\Phi(\,\cdot\,,a_{-i}^*)$, as a function of a single variable $a_i$, attains a maximum at $a_i=a_i^*$. By strict concavity, this implies that $a=a^*$ is the unique global maximum of $\Phi$ as a function of the whole profile $a$. Thus $\Phi(a)<\Phi(a^*)=0$ for~$a\ne a^*$. We conclude that $G$ has a unique CCE.

As shown by \cite*{even2009convergence}, the class of strictly socially concave games contains classical economic models such as linear Cournot oligopolies and Tullock contests with strictly convex production or effort costs $c_i$. These correspond to the following utility functions:
$$ \text{Cournot:}\quad  u_i(a)=a_i\Big(\alpha-\beta\sum_j a_j\Big) - c_i(a_i)\qquad \text{Tullock:} \quad u_i(a)=v_i\frac{a_i}{\sum_j a_j} - c_i(a_i).$$
By Corollary~\ref{cor:socially_concave_CCE}, $\CCE$ is unique in these games.

Social concavity is a convenient sufficient condition for a game to be strategically equivalent to an enforcement game, but it is not necessary. For example, a Tullock contest with linear costs, as well as all contests characterized in Figure~\ref{fig_contest_range}, are not strictly socially concave. Nevertheless, these contests have a unique $\CCE$. Moreover, for the Tullock contest with linear costs one can verify~\eqref{eq_CCE_check} with $\gamma_i=1/v_i$, which yields uniqueness for any number of players.
Condition~\eqref{eq_CCE_check} also suggests that for $n\ge 3$ contestants uniqueness of $\CCE$ is more demanding than existence of a pure Nash equilibrium, so an analog of Proposition~\ref{prop_contest} does not hold. Our approach can be used to study the restrictions imposed by uniqueness of $\CCE$ in multi-player contests further, which we leave for future work.

\section{Extensions}\label{sec:extensions}

We model robustness of a strategic interaction outcome to delegation by requiring that the delegation outcome be unique and hence coincide with the unique decentralized Nash outcome. An alternative, and a priori more permissive, way to capture that delegation does not introduce anything new is to assume that the set of delegation outcomes coincides with the convex hull of Nash equilibria. In that case, the intermediary’s role is limited to telling players which Nash equilibrium to play.

For a game $G=(A,u)$, denote by $\conv[\NE(G)]$ the convex hull of Nash equilibria in $\Delta(A)$. Consider a strong intermediary and the associated set of outcomes. Our goal is to characterize games for which
\begin{equation}\label{eq_convex_hull}
    \conv\big[\NE(G)\big]=\IRCP(G).
\end{equation}
If $\IRCP(G)$ is a singleton, then \eqref{eq_convex_hull} holds. The question is whether there are games for which $\IRCP(G)$ is not a singleton but~\eqref{eq_convex_hull} still holds. We show that this never happens unless the game is degenerate, so the two approaches are essentially equivalent.

We fix the set of players $N$ and the set of action profiles $A$, and identify the space of games with $u \in \mathbb{R}^{A \times N}$. We say that a property holds for a \emph{generic game} if the set of games violating it is a nowhere dense subset of $\mathbb{R}^{A \times N}$ with Lebesgue measure zero.
\begin{proposition}\label{prop_hull}
    For a generic game $G$, if $\IRCP(G)$ is not a singleton, then $\conv[\NE(G)]$ is a proper subset of $\IRCP(G)$.
\end{proposition}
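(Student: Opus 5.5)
The plan is to exploit genericity in two ways. First, for a generic game, every Nash equilibrium is quasi-strict and there are finitely many of them (Harsanyi). Second, the maximin levels and IRCP constraints are in general position. The aim is to show that, generically, whenever $\IRCP(G)$ is not a singleton, it contains a point outside the convex hull of the finitely many Nash equilibria. The natural witness is the product of maximin strategies $\nu=\nu_1\times\cdots\times\nu_n$, which always lies in $\IRCP(G)$. An alternative witness is a full-dimensional piece of $\IRCP(G)$ whose dimension exceeds what the hull of finitely many points can fill near a given point.

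I would split into two cases. In the first case, $\IRCP(G)$ has nonempty interior in $\Delta(A)$, which happens if some point of $\IRCP(G)$ has all IR constraints slack. If $\conv[\NE(G)]=\IRCP(G)$, this hull would have to be a full-dimensional polytope. This is possible in principle, so the dimension count alone does not settle the case. Instead I would use the structure of $\IRCP$ directly. It is an upper set in payoffs: it is the preimage, under the linear map $\mu\mapsto(\E_\mu u_i)_i$, of the orthant above the maximin levels. So $\IRCP(G)$ contains every distribution whose payoff vector weakly Pareto-dominates that of some IRCP point. In particular it contains a utilitarian-welfare maximizer, or more precisely a maximizer of $\sum_i\lambda_i\E_\mu u_i$ over $\IRCP(G)$ for generic $\lambda>0$. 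Its vertices are then points where IR constraints, together with the simplex constraints, bind.

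The key step is to show that, generically, some vertex of $\IRCP(G)$ is not a Nash equilibrium. Since $\conv[\NE(G)]$ has as vertices only Nash equilibria, equality in \eqref{eq_convex_hull} forces every vertex of $\IRCP(G)$ to be a Nash equilibrium.

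By Winkler's bound, vertices of $\IRCP(G)$, with $n$ constraints, are supported on at most $n+1$ profiles. I would show that for generic payoffs, a distribution supported on at most $n+1$ profiles with some IR constraints binding is not a Nash equilibrium unless it is a pure profile. Indeed, product distributions on at most $n+1$ points that bind IR constraints impose a codimension condition: the maximin value equals an expected payoff at a non-maximin configuration, which fails generically.

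So generically, all vertices of $\IRCP(G)$ that are Nash equilibria are pure strict Nash equilibria at which no IR constraint binds. The remaining task is to show that if all vertices are such pure profiles $a$ with $u_i(a)>\underline u_i$ strictly for every $i$, then $\IRCP(G)$ is a singleton. If such a vertex $\delta_a$ had all IR constraints slack, perturbing toward any other profile would stay in $\IRCP(G)$. That would contradict $\delta_a$ being a vertex, since a vertex of $\Delta(A)$ intersected with slack constraints is only a vertex of the simplex, and such points are vertices of $\IRCP(G)$ fine. This shows the argument needs care.

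Concretely, I would argue as follows. The polytope $\IRCP(G)$ is the simplex cut by $n$ halfspaces. If it is not a singleton and every vertex is a simplex vertex, then no IR cut creates new vertices. Hence $\IRCP(G)=\conv\{\delta_a : a \text{ individually rational}\}$, and each such $a$ must be Nash. But the maximin product $\nu$ is in $\IRCP(G)$. Generically $\nu$ gives exactly $\underline u_i$ to each player only if mixed, and a genuine cut then exists, producing a non-simplex vertex, which is a contradiction.

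The main obstacle is this last genericity bookkeeping. I must show that, off a closed measure-zero set of payoffs, any vertex of $\IRCP(G)$ lying on an IR hyperplane is not a Nash equilibrium. I would attack it by enumerating the finitely many supports and binding patterns, and noting that each coincidence is a nontrivial polynomial (real-algebraic) equation in $u$, whose zero set is nowhere dense and null.
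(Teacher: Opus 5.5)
\textbf{There is a genuine gap.} Your central genericity claim is false. You claim that an IR constraint binding at a mixed equilibrium is a nongeneric coincidence ``at a non-maximin configuration.'' Take matching pennies or any nearby $2\times 2$ game. There each player's equilibrium mixture is exactly the optimal punishment of the opponent in the zero-sum game defined by the opponent's payoffs. So on an open set of games the mixed equilibrium gives both players exactly their maximin levels. Now add a third player with a strictly dominant action $x$, whose other action gives players 1 and 2 much higher payoffs, so that the cheapest punishments use $x$. This yields an open set of three-player games with a mixed equilibrium of support $4=n+1$ at which two IR constraints bind. Hence your lemma (support at most $n+1$ plus a binding IR constraint implies pure) fails for $n\geq 3$.

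The step can be repaired by counting \emph{active} constraints instead of using $n$. An extreme point of $\IRCP(G)$ has support at most one plus the number of active IR constraints. Now take a quasi-strict mixed equilibrium $\nu$ and a player $j$ who plays a pure action $a_j^*$. Her IR constraint cannot bind generically. If it did, $\nu_{-j}$ would be an optimal punishment of $j$. By quasi-strictness, $a_j^*$ would then be $j$'s unique maximin action, and $\supp(\nu_{-j})$, which has at least two elements, would have to lie among the minimizers of $u_j(a_j^*,\cdot)$. That is a payoff tie. Suppose $m$ players mix over $k_i\geq 2$ actions each. Then extremality forces $\prod_i k_i\leq 1+m$, which is impossible for $m\geq 2$ and nongeneric for $m=1$. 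The paper reaches the same conclusion differently, through Proposition~\ref{prop:extremality} plus a two-constraint Winkler argument for the $2\times 2$ case.

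The second half does not go through either. You infer that Nash vertices are pure equilibria with no binding IR constraint. This is wrong: in the prisoner's dilemma and all nearby games, $\delta_{(d,d)}$ is a vertex of $\IRCP(G)$ with both IR constraints binding. Your closing step, ``a genuine cut then exists, producing a non-simplex vertex,'' is also not an argument. Once every vertex is a strict pure equilibrium, what you need is a point of $\IRCP(G)$ outside the convex hull of at least two strict pure equilibria.

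Your maximin product serves as that point only if some maximin strategy is mixed, and then for a different reason than you give. A mixed product distribution in that hull would need its whole support to consist of pure equilibria. Two of these would differ only in one player's action, contradicting strictness. If all maximin strategies are pure, the product is itself one of the pure equilibria and yields nothing. The paper supplies the missing witness through the result of Gul, Pearce, and Stacchetti: generically, two pure equilibria entail a mixed equilibrium. That equilibrium lies in $\conv[\NE(G)]=\IRCP(G)$ but cannot be a convex combination of strict pure equilibria.
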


The genericity assumption cannot be dropped. For example, if $G=(A,u)$ has $u\equiv 0$, then $\conv[\NE(G)]=\IRCP(G)=\Delta(A)$. A less degenerate example is given in Table~\ref{tab:2x2_degen}.

\begin{table}[ht]
\centering
\begin{tabular}{c|c|c}
		& $a_2$&$b_2$ \\ \hline
		$a_1$& $(1,1)$ & $(1,0)$ \\ 
		$b_1$&  $(0,1)$ & $(1,1)$\\
	\end{tabular}
\caption{A game with two pure Nash equilibria where $\conv[\NE(G)]=\IRCP(G)$}
\label{tab:2x2_degen}
\end{table}

\begin{proof}
By~\cite*{harsanyi1973oddness}, games with a finite set of Nash equilibria, all of which are quasi-strict, are generic.
Toward a contradiction, suppose that $\conv[\NE(G)]=\IRCP(G)$ but $\IRCP(G)$ is not a singleton. Since $\IRCP(G)$ is a polytope, it has at least two distinct extreme points, corresponding to distributions $\nu$ and $\nu'$. Because $\IRCP(G)$ coincides with the convex hull of Nash equilibria, $\nu$ and $\nu'$ are Nash equilibria that are extreme in $\IRCP(G)$ and, by assumption, quasi-strict.

We claim that a quasi-strict Nash equilibrium is an extreme point of $\IRCP(G)$ if and only if it is pure. Indeed, if it is extreme in $\IRCP(G)$, then it is also extreme in $\CCE(G)$ and, by Proposition~\ref{prop:extremality}, it is either pure or involves a $2\times 2$ randomization. The $2\times 2$ mixing case is ruled out as follows. Consider the induced $2\times 2$ game $G'$ and the set of all distributions that give both players the same expected utilities as the mixed Nash equilibrium. Since we impose only two linear constraints, extreme distributions in this set have support of size at most three~\citep*{winkler1988extreme}. Hence the mixed Nash equilibrium is not extreme in this set, not extreme in $\IRCP(G')$, and therefore not extreme in $\IRCP(G)$ by the assumption of quasi-strictness.

It follows that $\nu$ and $\nu'$ are two distinct strict pure Nash equilibria. By~\cite*{gul1993bound}, in a generic game the number of mixed equilibria is at least the number of pure equilibria minus one, so there exists a mixed equilibrium $\nu''$. Since $\nu''$ is mixed, it is not an extreme point of $\IRCP(G)$. However, if $\conv[\NE(G)]=\IRCP(G)$, then $\nu''$ must lie in the convex hull of the extreme points of $\IRCP(G)$, which are pure Nash equilibria. This is impossible: strict pure equilibria cannot be combined to create the indifferences required on the support of a mixed equilibrium.
\end{proof}

Similarly, one can ask about the structure of games such that
$$
\conv[\NE(G)]=\CE(G)\qquad \text{or}\qquad \conv[\NE(G)]=\CCE(G).
$$
The literature provides examples where these identities hold but the respective sets are not singletons.
\cite*{koessler2025correlated} show that $\conv[\NE(G)]=\CCE(G)$ for non-atomic convex potential games, which may have multiple Nash equilibria. Furthermore, $\conv[\NE(G)]=\CE(G)$ for potential games with a finite number of players \citep*{neyman1997correlated, ui2008correlated, cao2025correlated} and auctions \citep*{feldman2016correlated}, which may also have multiple Nash equilibria.  Whether genericity of $G$ is sufficient to rule out such examples---and to obtain that $\CE(G)$ or $\CCE(G)$ is generically larger than the convex hull of Nash equilibria unless it is a singleton---is, in our opinion, a very interesting but likely difficult question. Addressing it would require a deeper understanding of the structure of Nash equilibria, $\CE$, and $\CCE$.

\pagebreak

\bibliography{main}

\appendix
\section{Proof of \texorpdfstring{Proposition~\ref{prop_continuum_IRCP}}{Proposition 2}}\label{app_continuum_IRCP}
The first bullet point of Proposition~\ref{prop_continuum_IRCP} states that games affinely equivalent to enforcement games have unique IRCPs which are pure. For enforcement games themselves,  Lemma~\ref{lm:enf_unique} applies beyond the finite action setting to games in which players have arbitrary measurable action sets and bounded measurable utilities. Indeed, the proof of that lemma works verbatim for this setting. This unique IRCP must be the pure action profile $a^*$ given in the definition of enforcement games, which is necessarily an IRCP. Since positive affine transformations of utilities preserve the defining inequalities of IRCP, the same holds for games affinely equivalent to enforcement games.

We now prove the second bullet point of Proposition~\ref{prop_continuum_IRCP}, which states that the converse holds at this level of generality, provided that utilities are symmetric. 
\begin{proof}[Proof of Proposition~\ref{prop_continuum_IRCP}]
Let $G=(A,u)$ be a symmetric game with measurable action sets and bounded, measurable utilities and maximin levels $\underline{u}_i$. Suppose that $a^*$ is the unique IRCP of $G$. Note that the maximin levels are well-defined by the assumption that each player $i$ has a maximin strategy, which we denote by $\nu_i$. 
Consider the affinely equivalent game $G'=(A,u')$ with~$u'$ given by
\[
    u_i'(a)=u_i(a)-u_i(a^*).
\] In $G'$, each player receives a utility of zero at the unique IRCP $a^*$.  We claim that the maximin levels of $G'$ are zero. Indeed, symmetry implies that the maximin levels of all players coincide and they cannot be less than zero, or a profile $\mu$ that mixed $a^*$ with a sufficiently small probability on any other profile would be a distinct IRCP. Since the product of maximin strategies $\nu_1\times \cdots\times \nu_n$ is an $\IRCP$ we conclude that, by uniqueness, each $\nu_i$ is a point mass at $a_i^*$. It follows that playing $a_i^*$ unilaterally guarantees $i$ a non-negative utility.

All that remains is to show that the welfare at any action profile other than $a^*$ is negative. Indeed, consider $b=(b_1,\dots,b_n)\neq a^*$ and let $\mu$ be the uniform distribution over all profiles $(b_{\pi(1)},\dots, b_{\pi(n)})$ constructed by permuting the components of $b$. %$
By symmetry $\mu$ yields each player identical utility, which is negative, since $\mu$ is not the IRCP $a^*$. However, the welfare at $\mu$ is equal to the welfare at $b$, so $b$ also yields a negative welfare, and we conclude that $G'$ is an enforcement game.
\end{proof}

\section{Proof of \texorpdfstring{Theorem~\ref{th:CCE}}{Theorem 2}}\label{app_CCE}
The proof of Theorem~\ref{th:CCE} relies on the following lemma, which establishes that a unique CCE is a quasi-strict Nash equilibrium.\footnote{A Nash equilibrium is quasi-strict if any player who deviates to an action outside of the support of their mixed-strategy receives a strictly lower expected utility.}
\begin{lemma}\label{lm:CCEstrict}
        A unique CCE is a quasi-strict Nash equilibrium.
\end{lemma}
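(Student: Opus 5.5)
Let $\CCE(G)=\{\nu\}$. The argument has three steps. First we show $\nu$ is a product distribution, hence a Nash equilibrium. Then we show it is quasi-strict by a perturbation-and-duality argument.

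\emph{Step 1: $\nu$ is a Nash equilibrium.} A finite game has a Nash equilibrium, and $\NE(G)\subseteq\CCE(G)$. By uniqueness, $\nu$ coincides with that Nash equilibrium, so $\nu=\nu_1\times\cdots\times\nu_n$ is a mixed Nash equilibrium. It remains to show that for every $i$ and every $b_i\notin\supp(\nu_i)$,
\[
\sum_a\nu(a)u_i(a)>\sum_a\nu(a)u_i(b_i,a_{-i}).
\]

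\emph{Step 2: perturb toward $b_i$.} Suppose instead that some $i$ and $b_i\notin\supp(\nu_i)$ satisfy this with equality, so $b_i$ is also a best response to $\nu_{-i}$. The natural first try is $\nu'=\delta_{b_i}\times\nu_{-i}$. Player $i$'s CCE constraints hold at $\nu'$, since $b_i$ is a best response to $\nu_{-i}$. The constraints of the other players, however, may fail. So the plan is to use the uniqueness hypothesis more cleverly, through linear programming duality.

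\emph{Step 3: duality.} Since $\CCE(G)=\{\nu\}$, the point $\nu$ is the unique maximizer of the linear objective $\mu\mapsto\mu(\{a: a_i=b_i\})$ over the polytope $\CCE(G)$. Because $\nu$ assigns this set measure zero, the optimal value is $0$. LP duality then gives multipliers $\lambda_{j,a_j'}\ge0$ such that
\[
\mathbf 1[a_i=b_i]\le\sum_{j,a_j'}\lambda_{j,a_j'}\bigl(u_j(a)-u_j(a_j',a_{-j})\bigr)+c
\]
for every $a$, with $c=0$ by the value computation. Integrating this inequality against $\delta_{b_i}\times\nu_{-i}$ makes the left side equal to $1$. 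On the right, the terms with $j=i$ contribute nonpositive amounts, since $b_i$ is a best response to $\nu_{-i}$. For $j\ne i$, each term equals $u_j$ evaluated at a distribution whose $i$-coordinate is $b_i$ rather than drawn from $\nu_i$.

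To control those $j\neq i$ terms, I would integrate over the mixture $(1-t)\nu+t(\delta_{b_i}\times\nu_{-i})=\big((1-t)\nu_i+t\delta_{b_i}\big)\times\nu_{-i}$. This is a product distribution, and a product distribution is a CCE exactly when it is a Nash equilibrium. The expected value of the right-hand side along this path is affine in $t$ and vanishes at $t=0$. Complementary slackness at the unique optimum $\nu$ gives that $\lambda_{j,a_j'}>0$ only on constraints binding at $\nu$, i.e.\ best responses $a_j'$ to $\nu_{-j}$. Using this, I would compare slopes to derive a contradiction with the left side growing like $t$.

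\emph{Main obstacle.} The hard step is Step 3: showing the duality certificate cannot be consistent when $b_i$ is an alternative best response. If the slope comparison fails, the fallback is to cite \cite{viossat2008having}, who show that a unique CE is quasi-strict. We would then transfer this to CCE, noting that $\CE(G)\subseteq\CCE(G)$ is nonempty, so a singleton $\CCE(G)$ forces $\CE(G)=\{\nu\}$, and apply that result directly.
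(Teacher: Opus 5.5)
Your fallback is sound, and it is exactly how the paper first justifies the lemma. Since $\emptyset\neq\CE(G)\subseteq\CCE(G)=\{\nu\}$, the point $\nu$ is the unique correlated equilibrium, and Lemma~3 of \cite{viossat2008having} makes it quasi-strict. So the statement does follow from what you wrote.

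The self-contained duality route in Step~3, however, does not close. First, a sign slip: the dual certificate should read $\mathbf 1[a_i=b_i]\le\sum_{j,a_j'}\lambda_{j,a_j'}\bigl(u_j(a_j',a_{-j})-u_j(a)\bigr)$. Only with this sign are the $j=i$ terms nonpositive, as you claim. More seriously, the slope comparison cannot yield a contradiction. Because $\mu_t=(1-t)\nu+t\,(\delta_{b_i}\times\nu_{-i})$, both sides are affine in $t$:
\begin{itemize}
\item the left side equals $t$;
\item complementary slackness makes the right side vanish at $t=0$;
\item it also kills the $j=i$ terms at $t=1$, since $\lambda_{i,a_i'}>0$ only for best responses $a_i'$ to $\nu_{-i}$, and $b_i$ is one.
\end{itemize}
What remains is
\[
t\le t\sum_{j\neq i}\sum_{a_j'}\lambda_{j,a_j'}D_{j,a_j'},
\]
where $D_{j,a_j'}$ is player $j$'s gain from deviating to $a_j'$ under $\delta_{b_i}\times\nu_{-i}$. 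This is just the $t=1$ evaluation again, and it is perfectly consistent. Nothing stops the other players from gaining by deviating once $i$ switches to $b_i$; that is precisely the obstacle you flagged in Step~2. Complementary slackness at $\nu$ says nothing about the $D_{j,a_j'}$.

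The missing idea is to test the certificate on the product distribution built from the multipliers, not on distributions built from $\nu$.
\begin{itemize}
\item Let $\Lambda_j=\sum_{a_j'}\lambda_{j,a_j'}$. Set $\sigma_j=\lambda_{j,\cdot}/\Lambda_j$ when $\Lambda_j>0$, and fill in $\sigma_j$ arbitrarily when $\Lambda_j=0$.
\item Under any product distribution whose $j$-th marginal is $\sigma_j$, player $j$'s aggregated term vanishes identically, because $\sum_{a_j'}\sigma_j(a_j')u_j(a_j',a_{-j})-\sum_{a_j}\sigma_j(a_j)u_j(a_j,a_{-j})=0$.
\item Integrating against $\sigma_{-k}\times\delta_{c_k}$ therefore isolates player $k$ and gives $\Lambda_k\bigl(u_k(\sigma)-u_k(c_k,\sigma_{-k})\bigr)\ge\mathbf 1[k=i,\ c_k=b_i]$. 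Hence $\Lambda_i>0$, each $\sigma_k$ with $\Lambda_k>0$ is a best response to $\sigma_{-k}$, and $b_i$ is strictly worse than $\sigma_i$.
\item The certificate holds pointwise, so the filling is arbitrary. Choosing it to be an equilibrium of the residual game among players with $\Lambda_j=0$ makes $\sigma$ a Nash equilibrium, hence a CCE, hence $\sigma=\nu$. This contradicts $b_i$ being a best response to $\nu_{-i}$.
\end{itemize}

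The paper's direct proof runs the same cancellation in a cleaner package. In the zero-sum game where Minimizer picks a pair $(i,a_i)$, strict complementary slackness (Theorem~\ref{th_Karlin}) gives a Minimizer strategy $\tau(i,a_i)=\eta(i)\sigma_i(a_i)$ supported on all best-response pairs. So every $\eta(i)>0$ and no completion step is needed. The product $\sigma$ is then shown to be a Nash equilibrium in which actions outside $\supp(\mu)$ are strictly worse, and uniqueness gives $\sigma=\mu$.
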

Lemma~\ref{lm:CCEstrict} follows from Lemma~3 of \cite*{viossat2008having}, which states that any unique correlated equilibrium---which includes any unique CCE---is quasi-strict. For completeness, we provide a short proof of Lemma~\ref{lm:CCEstrict} via a duality argument. 
We will use the following strict complementary slackness result for zero-sum games due to \cite*{shapley1949solutions} and \cite*{arrow1953admissible}; see Theorem 3.11 in a book by \cite*{karlin2003mathematical}.
\begin{theorem}\label{th_Karlin}
Consider a zero-sum game with finite sets of pure strategies $S_1$ and $S_2$ and payoffs $\pi_1=-\pi_2$. Let $V$ be its value and $\Sigma_i^*\subset \Delta(S_i)$ be the set of maximin strategies of player~$i$. Then there exists $\sigma_2\in \Sigma_2^*$ such that $\sigma_2(s_2)>0$ if and only if $\sum_{s_1} \pi_1(s_1,s_2)\sigma_1(s_1)=V
    $ for all $\sigma_1\in \Sigma^*_1$.
\end{theorem}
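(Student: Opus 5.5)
The statement is the classical strict complementary slackness theorem, and I would prove it directly by LP duality. Throughout, player~$1$ maximizes $\pi_1$ and player~$2$ minimizes it. Call a pure strategy $s_2$ \emph{essential} if $\sum_{s_1}\pi_1(s_1,s_2)\sigma_1(s_1)=V$ for every $\sigma_1\in\Sigma_1^*$. The plan is to show that every $\sigma_2\in\Sigma_2^*$ is supported on essential strategies, that every essential $s_2$ is in the support of some $\sigma_2\in\Sigma_2^*$, and then to average.

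\emph{Step 1 (supports lie in essential strategies).} Fix $\sigma_2\in\Sigma_2^*$ and $\sigma_1\in\Sigma_1^*$. Optimality of $\sigma_1$ gives $\pi_1(\sigma_1,t)\ge V$ for every $t\in S_2$. Optimality of the pair gives $\pi_1(\sigma_1,\sigma_2)=V$. Since $V=\sum_t\sigma_2(t)\pi_1(\sigma_1,t)$ is an average of numbers that are each $\ge V$, every $t$ with $\sigma_2(t)>0$ must satisfy $\pi_1(\sigma_1,t)=V$. As $\sigma_1\in\Sigma_1^*$ was arbitrary, every point in the support of $\sigma_2$ is essential. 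This holds for every optimal $\sigma_2$, not only the one we construct.

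\emph{Step 2 (each essential strategy is used by some optimal $\sigma_2$).} This is the main step. Fix an essential $s_2$ and consider the linear program
\[
\max_{x\in\Delta(S_1)}\ \sum_{s_1}x(s_1)\pi_1(s_1,s_2)\quad\text{s.t.}\quad \sum_{s_1}x(s_1)\pi_1(s_1,t)\ge V\ \ \text{for all } t\in S_2.
\]
Its feasible set is exactly $\Sigma_1^*$, so essentiality means the optimal value equals $V$. The dual has a free variable $z$ for the constraint $\sum_{s_1}x(s_1)=1$ and multipliers $y_t\ge 0$ for the other constraints. Strong duality gives $z-V\sum_t y_t=V$ together with
\[
z\ \ge\ \pi_1(s_1,s_2)+\sum_t y_t\,\pi_1(s_1,t)\qquad\text{for all } s_1\in S_1.
\]
Set $\sigma_2'=(\delta_{s_2}+y)/(1+\sum_t y_t)$. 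Dividing the inequality by $1+\sum_t y_t$ gives $\pi_1(s_1,\sigma_2')\le z/(1+\sum_t y_t)=V$ for every $s_1$. Hence $\sigma_2'$ holds player~$1$ to at most $V$, so $\sigma_2'\in\Sigma_2^*$, and $\sigma_2'(s_2)>0$.

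The delicate part is getting the signs of the dual right so that the normalized multiplier vector is a genuine minimizing strategy. If the duality bookkeeping became messy, I would fall back on a Farkas/separation argument with the same content: if no optimal $\sigma_2$ charged $s_2$, separation would produce an optimal $\sigma_1$ with $\pi_1(\sigma_1,s_2)>V$, contradicting essentiality.

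\emph{Step 3 (averaging).} The set $\Sigma_2^*$ is convex. Let $\sigma_2$ be the uniform average, over all essential $s_2$, of the strategies $\sigma_2'$ built in Step~2. Then $\sigma_2\in\Sigma_2^*$, and its support is the union of their supports. By Step~2 this union contains every essential strategy, and by Step~1 it contains nothing else. This gives exactly the claimed equivalence $\sigma_2(s_2)>0$ if and only if $s_2$ is essential.
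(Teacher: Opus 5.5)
Your proof is correct. The paper gives no proof of this statement: it imports it as the classical result of Shapley--Snow and Arrow--Barankin--Blackwell (Theorem~3.11 in Karlin) and remarks only that it expresses strict complementary slackness in linear programming. You instead derive it from ordinary LP strong duality.

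Step~1 is plain complementary slackness. Step~2 is the substantive part, and your sign bookkeeping is right. The auxiliary LP is feasible, since its feasible set is $\Sigma_1^*\neq\emptyset$, and it is bounded. So the dual optimum is attained with $z=V(1+\sum_t y_t)$. This gives $\pi_1(s_1,\sigma_2')\le V$ for every $s_1$, hence $\sigma_2'\in\Sigma_2^*$ with $\sigma_2'(s_2)>0$, and the Farkas fallback is not needed. Step~3 then builds a single optimal strategy with exactly the required support, using convexity of $\Sigma_2^*$.

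Your route buys self-containment from standard duality alone, at the cost of one auxiliary LP per essential strategy; the paper's citation buys brevity.

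One small gap to close: the uniform average in Step~3 requires the set of essential strategies to be nonempty. This follows from Step~1 applied to any element of $\Sigma_2^*$, which is nonempty by the minimax theorem.
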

That is, a pure strategy $s_2$ is played by the second player with positive probability at some maximin strategy if this action belongs to their best response to every maximin strategy of player~1. Without the game-theoretic framing, this result can be seen as an expression of strict complementary slackness in linear programming. The following corollary will be particularly useful for proving Lemma~\ref{lm:CCEstrict}.
\begin{corollary}\label{cor:Karlin}
Consider a zero-sum game $G$ with finite sets of pure strategies $S_1$ and $S_2$ and payoffs $\pi_1=-\pi_2$. Let $V$ be the value of $G$ and suppose that player~1 has a unique maximin strategy $\sigma_1$. Then there exists a maximin strategy of the second player $\sigma_2$ such that $\supp(\sigma_2)=\{s_2\colon \pi_1(\sigma_1,s_2)=V\}$ and $\pi_1(s_1,\sigma_2)<V$ for $s_1\notin\supp(\sigma_1)$.
\end{corollary}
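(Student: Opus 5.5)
The plan is to derive Corollary~\ref{cor:Karlin} from Theorem~\ref{th_Karlin}, applied once to each player and then averaged.

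\emph{Player~2's side.} Apply Theorem~\ref{th_Karlin} as stated. Since player~1's maximin set is the singleton $\Sigma_1^*=\{\sigma_1\}$, the condition ``$\sum_{s_1}\pi_1(s_1,s_2)\sigma_1'(s_1)=V$ for all $\sigma_1'\in\Sigma_1^*$'' becomes simply $\pi_1(\sigma_1,s_2)=V$. Theorem~\ref{th_Karlin} then gives a maximin strategy $\sigma_2'\in\Sigma_2^*$ whose support is exactly $\{s_2\colon \pi_1(\sigma_1,s_2)=V\}$.

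\emph{Player~1's side.} Apply Theorem~\ref{th_Karlin} with the roles of the players swapped, i.e., to the zero-sum game in which player~2 is the maximizer with payoff $\pi_2=-\pi_1$ and value $-V$. This yields a maximin strategy $\sigma_1''\in\Sigma_1^*$ such that $\sigma_1''(s_1)>0$ if and only if $\pi_1(s_1,\sigma_2)=V$ for all $\sigma_2\in\Sigma_2^*$. Uniqueness gives $\sigma_1''=\sigma_1$. Hence for each $s_1\notin\supp(\sigma_1)$ there is some $\sigma_2^{(s_1)}\in\Sigma_2^*$ with $\pi_1(s_1,\sigma_2^{(s_1)})\neq V$. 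Because $\sigma_2^{(s_1)}$ is optimal for player~2, every pure $s_1$ earns at most $V$ against it, so in fact $\pi_1(s_1,\sigma_2^{(s_1)})<V$.

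\emph{Combining.} Define $\sigma_2$ as a strictly positive convex combination of $\sigma_2'$ and the finitely many $\sigma_2^{(s_1)}$, $s_1\notin\supp(\sigma_1)$. The set $\Sigma_2^*$ is convex, so $\sigma_2$ is maximin. For $s_1\notin\supp(\sigma_1)$, $\pi_1(s_1,\cdot)$ is linear in the mixed strategy, every component gives at most $V$, and one component gives strictly less with positive weight; hence $\pi_1(s_1,\sigma_2)<V$.

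\emph{Support.} The support of $\sigma_2$ is the union of the supports of its components, so it contains $\supp(\sigma_2')=\{s_2\colon\pi_1(\sigma_1,s_2)=V\}$. For the reverse inclusion, note that any maximin $\tau\in\Sigma_2^*$ can only put weight on best responses to the optimal $\sigma_1$. Indeed, $\pi_1(\sigma_1,s_2)\ge V$ for every $s_2$, and $\pi_1(\sigma_1,\tau)=V$, which forces $\pi_1(\sigma_1,s_2)=V$ on $\supp(\tau)$. So each component's support lies in this set, and equality holds.

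The only delicate step is the sign argument: showing that ``$\neq V$'' upgrades to ``$<V$'' uses optimality of $\sigma_2^{(s_1)}$, and the averaging must keep every weight strictly positive. Everything else follows directly from Theorem~\ref{th_Karlin}.
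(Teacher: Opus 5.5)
Your proof is correct. It is the derivation from Theorem~\ref{th_Karlin} that the paper leaves implicit (the paper states the corollary without a written proof): apply strict complementary slackness once for each player, use uniqueness of $\sigma_1$ to obtain, for each $s_1\notin\supp(\sigma_1)$, an optimal strategy of player~2 that makes $s_1$ strictly worse than $V$, and take a strictly positive average. You also correctly check that averaging preserves maximin-ness, since $\Sigma_2^*$ is convex, and preserves the exact support, since every optimal strategy of player~2 is supported on best responses to $\sigma_1$.
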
 

\begin{proof}[Proof of Lemma~\ref{lm:CCEstrict}]
Let $\mu$ be the unique CCE of $G=(A,u)$, whose set of players is $N=\{1,\dots,n\}$. Consider the following auxiliary zero-sum game between Maximizer, whose set of pure strategies is $A$, and Minimizer who must choose a player $i$ as well as an action of that player $a_i$. For a profile $b$ chosen by Maximizer and a pair $(i,a_i)$ chosen by Minimizer, Maximizer's payoff is 
\[
    u_i(b)-u_i(a_i,b_{-i}).
\] Since $\mu$ is the unique CCE of $G$, it is also the unique maximin strategy of Maximizer. Moreover, $\mu$ must be a Nash equilibrium. Thus, by choosing any player $i$ and action $a_i$ played with positive probability under $\mu$, Minimizer achieves a payoff of zero against $\mu$. It follows that for every mixed-strategy $\nu\in \Delta(A)$, 
\begin{equation}\label{eq:Minimizer_wins}
   \min_{(i,a_i)} \  \sum_{b}\nu(b) \left(u_i(b)-u_i(a_i,b_{-i})\right) \leq 0,
\end{equation} with equality if and only if $\nu=\mu$.

By Corollary~\ref{cor:Karlin}, Minimizer has a mixed-strategy $\tau$ whose support is all the pairs $(i,a_i)$ that lead to zero payoff as a response to $\mu$ and such that Maximizer receives a payoff of zero if she chooses a pure strategy in the support of $\mu$ and a strictly negative payoff under any other pure strategy. We can express $\tau$ as a distribution $\eta\in \Delta(N)$ and a collection of mixed-strategies $\sigma_i\in \Delta(A_i)$, $i\in N$, such that $\tau(i,a_i)=\eta(i)\sigma_i(a_i)$. For any $a_i$ played with positive probability under $\mu$, the pair $(i,a_i)$ is a best response to $\mu$, and thus $\eta(i)\sigma_i(a_i)>0$. In particular, $\eta(i)>0$ for all $i$.

We will now show that $\sigma = \sigma_1\times\cdots\times\sigma_n$ defines a quasi-strict Nash equilibrium. Indeed, let $b_i \in A_i$ and suppose Maximizer plays $\nu = \sigma_{-i}\times \delta_{b_i}$ and Minimizer chooses $\tau$. 

Each $j\neq i$ contributes zero to Maximizer's resulting payoff, since for all $b_{-j}$,
\[
   \eta(j)\left(\sum_{b_j\in A_j}\sigma_j(b_j)u_j(b_j,b_{-j})-\sum_{a_j\in A_j}\sigma_j(a_j)u_j(a_j,b_{-j})\right)=0. 
\] By \eqref{eq:Minimizer_wins}, the contribution from $i$ is
\begin{equation*}
    \eta(i)\sum_{a}\left(u_i(b_i,a_{-i})-u_i(a) \right)\sigma(a)\leq 0,
\end{equation*}
   with equality if and only if $b_i$ is played with positive probability under $\mu$. Since $\eta(i)$ is positive, this is precisely the condition for $\sigma$ to be a quasi-strict Nash equilibrium. Since $\mu$ is the unique Nash equilibrium of $G$, we have $\mu=\sigma$, so $\mu$ is a quasi-strict Nash equilibrium.
\end{proof}
With Lemma~\ref{lm:CCEstrict} in hand, we will now show that Theorem~\ref{th:CCE} follows from Theorem~\ref{th:IRCP}.

\begin{proof}[Proof of Theorem~\ref{th:CCE}]
	One direction is immediate. If $G=(A,u)$ is strategically equivalent to an enforcement game, then it has a unique $\CCE$ since enforcement games have unique $\IRCP$ and thus $\CCE$,  and the set of $\CCE$ is preserved under strategic equivalence.

	We now prove the opposite direction. In particular, we will show that if the action profile $a^*$ is a unique $\CCE$ of $G=(A,u)$ then it is a unique $\IRCP$ in the game $G'=(A,u')$ with utilities given by
    \[
    u_i'(a)=u_i(a)-u_i(a_i^*,a_{-i}).
    \]
    Since $a^*$ is a unique CCE in $G'$, by Lemma~\ref{lm:CCEstrict}, it is a strict Nash equilibrium. Each player's equilibrium utility is $0$ in $G'$. Thus, player $i$'s maximin utility is at most $0$. On the other hand, $a_{i}^*$ guarantees $i$ a utility of $0$, so each player's maximin utility in $G'$ is $0$. Let $m$ be the minimum amount that any player $i$ can lose from deviating from equilibrium to another pure action. 
    
   Let $\mu$ in $\IRCP(G')$, so that the expected utility of player $i$ under $\mu$ is at least $0$ for all $i$. Let $M$ be the maximum amount that any player $i$ can gain from deviating from $\mu$ to another pure action, and note that $M\ge 0$, since otherwise $\mu$ would define a CCE that is distinct from $a^*$. 
   
   Let $\alpha \in (0,\frac{m}{m+M})$ and consider the correlated profile $\nu=(1-\alpha)\delta_{a^*}+\alpha\mu$. Each player's expected utility under $\nu$ is at least $0$, since this holds for both $a^*$ and $\mu$. If some player~$i$ deviates from $\nu$ to some pure action $a_i$, they incur a penalty of at least $(1-\alpha)m$ by deviating from $a^*$ and gain at most $\alpha\cdot M$ from deviating from $\mu$. This is a net loss, meaning $\nu$ is a CCE. Since the only CCE is at $a^*$, we conclude that $\mu=\delta_{a^*}$ and that this is the unique IRCP of $G'$. 
   
   Finally, by Theorem~\ref{th:IRCP}, $G'$ is affinely equivalent to an enforcement game. It follows that $G$ is strategically equivalent to this enforcement game.
\end{proof}

\section{Proof of Proposition~\ref{prop:extremality}}\label{app_extreme}
We will need the following lemma. It will help us reduce the question to considering $2\times 2$ or $2\times 3$ randomization
\begin{lemma}
\label{lem:combinatorics}
Let $k_1, \dots, k_m$ be integers with $k_i \geq 2$ for all $i=1, \dots, m$. If
$$\prod_{i=1}^m k_i\leq 1+\sum_{i=1}^m k_i,$$
then $m\leq 2$. Moreover, if $m=2$, then $\{k_1, k_2\}$ is either $\{2,2\}$ or $\{2,3\}$.
\end{lemma}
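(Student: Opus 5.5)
The plan is to prove the lemma by elementary estimates that compare how fast the product and the sum grow. First I will rule out $m\ge 3$. Then, for $m=2$, I will solve the inequality directly. The degenerate case $m=1$ is consistent with the conclusion $m\le 2$ and needs no treatment.

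\emph{Step 1: the case $m\ge 3$.} I will show that $\prod_{i=1}^m k_i \ge 2+\sum_{i=1}^m k_i$ whenever $m\ge 3$ and all $k_i\ge 2$. This contradicts the hypothesis. The tool is the inequality $xy\ge x+y$ for $x,y\ge 2$, which holds because $xy-x-y=(x-1)(y-1)-1\ge 0$. Applying it inductively gives $\prod_{i=1}^{m} k_i \ge k_1\cdots k_{m-1}+k_m$.

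A cleaner route is to prove by induction on $m\ge 3$ that $P_m-S_m\ge 2$, where $P_m=\prod_{i\le m}k_i$ and $S_m=\sum_{i\le m}k_i$.
\begin{itemize}
\item Base case $m=3$. Write $k_3P_2 - (S_2+k_3) = P_2(k_3-1) - S_2$. Since $P_2\ge S_2$ and $k_3-1\ge 1$, this is at least $P_2(k_3-2)+P_2-S_2$, which equals $P_2(k_3-2)+(k_1-1)(k_2-1)-1$.
\begin{itemize}
\item If $k_3\ge 3$, the first term is at least $P_2\ge 4$, so the expression is at least $4-1=3\ge 2$.
\item If $k_3=2$, the expression equals $(k_1-1)(k_2-1)-1$. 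This can be as small as $0$ when $k_1=k_2=2$. In that case, however, the direct computation $8$ versus $1+6=7$ shows the hypothesis fails. To avoid such fiddling, I will instead order the terms so that $k_3=\max_i k_i$, or simply check the residual cases by hand.
\end{itemize}
\item Induction step. Passing from $m$ to $m+1$, we have $P_{m+1}-S_{m+1}=k_{m+1}P_m - S_m - k_{m+1}$. Since $P_m\ge S_m+2\ge 2$, this is at least $k_{m+1}(P_m-1)-S_m\ge 2(P_m-1)-S_m\ge P_m-2+(P_m-S_m)\ge P_m\ge 2$.
\end{itemize}
So everything reduces to the base case $m=3$: for all $k_i\ge 2$, show $k_1k_2k_3\ge k_1+k_2+k_3+2$. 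This follows because the left side minus the right side is nondecreasing in each variable (its partial derivative in $k_1$ is $k_2k_3-1>0$), so the minimum is at $(2,2,2)$, where the difference is $8-8=0$.

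\emph{Step 2: the case $m=2$.} The inequality $k_1k_2\le 1+k_1+k_2$ is equivalent to $(k_1-1)(k_2-1)\le 2$. With $k_i-1\ge 1$, the only possibilities are $(1,1)$ and, up to order, $(1,2)$. These give $\{k_1,k_2\}=\{2,2\}$ or $\{2,3\}$.

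The only mildly delicate point is pinning down the base case $m=3$ correctly; the monotonicity argument at the minimum point $(2,2,2)$ handles it.
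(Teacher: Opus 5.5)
Your proof is correct, but for $m\ge 3$ it takes a different route from the paper. The $m=2$ step is the same as the paper's.

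\textbf{Comparison with the paper.} The paper divides the inequality by $\prod_i k_i$. It then bounds the right-hand side $\frac{1}{\prod_i k_i}+\sum_i \frac{1}{\prod_{j\ne i}k_j}$ by $\frac{1+2m}{2^m}<1$, which handles every $m\ge 3$ at once through one worst-case substitution $k_i=2$, with no induction. You instead prove the stronger gap $P_m-S_m\ge 2$ by induction on $m$:
\begin{itemize}
\item You settle the extremal case only at $m=3$, using monotonicity of the multilinear function $k_1k_2k_3-k_1-k_2-k_3-2$. This is valid, since its forward differences $k_2k_3-1$ and so on are positive on integers $\ge 2$.
\item You then propagate the bound with $P_{m+1}-S_{m+1}\ge (P_m-2)+(P_m-S_m)$.
\end{itemize}
This gives a quantitative conclusion, namely that the gap grows at least like $P_m$. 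The cost is a few more lines than the paper's non-inductive bound.

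\textbf{One thing to clean up.} Your first attempt at the base case uses a wrong identity. In fact
\[
k_3P_2-(S_2+k_3)=k_3(P_2-1)-S_2,
\]
not $P_2(k_3-1)-S_2$. At $k=(2,2,2)$ the true value is $2$, not $0$, so the ``residual case'' you worried about is an artifact of the slip. With the correct identity, your induction-step chain already gives the base case directly:
\[
k_3(P_2-1)-S_2\ \ge\ (P_2-2)+(P_2-S_2)\ \ge\ 2,
\]
using $P_2\ge 4$ and $P_2-S_2=(k_1-1)(k_2-1)-1\ge 0$. So the detour through reordering or monotonicity is unnecessary. Since you abandoned that branch, it is not a gap in the final argument, but the erroneous computation should be deleted.
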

\begin{proof}
Divide the inequality by $\prod_{i=1}^m k_i>0$ to obtain
\[ 1 \leq \frac{1}{\prod_{i=1}^m k_i} + \sum_{i=1}^m \frac{1}{\prod_{j \neq i} k_j}. \]
Since $k_i\ge 2$ for all $i$, the right-hand side is at most $\frac{1}{2^m}+\frac{m}{2^{m-1}}=\frac{1+2m}{2^m}$, which is strictly less than $1$ for all $m\ge 3$. Indeed, this expression equals $7/8$ at $m=3$ and decreases thereafter. Thus $m\le 2$.
For $m=2$, the inequality becomes $k_1k_2\le 1+k_1+k_2$, i.e.\ $(k_1-1)(k_2-1)\le 2$. Since $k_i\ge 2$, each factor is at least $1$, so the only possibilities are $\{2,2\}$ and $\{2,3\}$.
\end{proof}

\begin{proof}[Proof of Proposition~\ref{prop:extremality}]
Let $\nu=\nu_1\times \cdots \times \nu_n$ be a quasi-strict Nash equilibrium of $G=(A,u)$. Write $k_i=|\supp(\nu_i)|$, and let $m$ be the number of players with $k_i\ge 2$. Without loss of generality, assume these are players $i=1,\ldots,m$. We prove each direction in turn.

($\Leftarrow$) If $\nu$ is pure ($m=0$), then it is a vertex of the simplex $\Delta(A)$ and hence extreme in $\Delta(A)$. Therefore it is also extreme in the subset $\CCE(G)\subseteq \Delta(A)$.

Now suppose $\nu$ involves a $2\times 2$ randomization, i.e., $m=2$ and $k_1=k_2=2$. Consider the induced $2$-player $2\times 2$ game $G'$ obtained by fixing players $3,\ldots,n$ at their equilibrium actions and restricting players $1$ and $2$ to $\supp(\nu_1)$ and $\supp(\nu_2)$. The equilibrium $\nu$ induces a mixed Nash equilibrium of $G'$. By \cite*{calvo2006}, in a $2\times 2$ game a mixed Nash equilibrium is an extreme point of the correlated equilibrium set. Moreover, in this $2\times 2$ setting $\CE(G')=\CCE(G')$ (see the discussion in Section~\ref{sec:CCE}), so $\nu$ is extreme in $\CCE(G')$.

To conclude, suppose for contradiction that $\nu$ is not extreme in $\CCE(G)$, so $\nu=\alpha \mu'+(1-\alpha)\mu''$ for some $\alpha\in(0,1)$ and distinct $\mu',\mu''\in \CCE(G)$. Since $\nu(a)=0$ for every $a\notin \supp(\nu)$, we must have $\mu'(a)=\mu''(a)=0$ for all $a\notin \supp(\nu)$ as well. Thus $\mu'$ and $\mu''$ can be identified with elements of $\CCE(G')$, and they represent a non-trivial convex decomposition of $\nu$ in $\CCE(G')$, contradicting extremality of $\nu$ in $\CCE(G')$. Hence $\nu$ is extreme in $\CCE(G)$.

\smallskip

($\Rightarrow$) Assume that $\nu$ is an extreme point of $\CCE(G)$. Let $k$ denote the number of linearly independent incentive constraints that are active at $\nu$. By \cite*{winkler1988extreme}, any extreme point of the set of probability distributions satisfying these constraints is supported on at most $k+1$ action profiles. Since only players $1,\ldots,m$ randomize, we have $|\supp(\nu)|=\prod_{i=1}^m k_i$.

By quasi-strictness, for each mixing player $i\le m$ the incentive constraints corresponding to deviations $a_i'\notin\supp(\nu_i)$ are slack at $\nu$. Hence each such player $i$ contributes at most $k_i$ active constraints. For players $i>m$ who play pure, the only constraint with $a_i'$ in the support reduces to $0\ge 0$ and therefore does not contribute to the count of independent active constraints. It follows that
$$\prod_{i=1}^m k_i\leq 1+\sum_{i=1}^m k_i.$$
By Lemma~\ref{lem:combinatorics}, this implies $m\le 2$, and if $m=2$ then $\{k_1,k_2\}$ is either $\{2,2\}$ or $\{2,3\}$.

We next rule out the cases of a single mixing player and of two mixing players, one mixing over~$2$ actions and the other over~$3$.

If $m=1$, then player $1$ is indifferent among all actions in $\supp(\nu_1)$. Choose two distinct mixed strategies $\nu_1',\nu_1''$ with support contained in $\supp(\nu_1)$ such that $\nu_1=\alpha \nu_1'+(1-\alpha)\nu_1''$ for some $\alpha\in(0,1)$. Let
$\mu'=\big((1-\varepsilon)\nu_{1}+\varepsilon \nu_1'\big)\times \nu_{-1}$ and
$\mu''=\big((1-\varepsilon)\nu_{1}+\varepsilon \nu_1''\big)\times \nu_{-1}$.
Since, by quasi-strictness, all constraints involving actions outside $\supp(\nu)$ are slack, we have $\mu',\mu''\in \CCE(G)$ for $\varepsilon>0$ small enough. Then $\nu=\alpha \mu'+(1-\alpha)\mu''$ is a non-trivial convex decomposition, contradicting extremality. Thus $m\neq 1$.

Finally, suppose $m=2$ and $\{k_1,k_2\}=\{2,3\}$. Consider the induced $2$-player subgame obtained by fixing players $3,\ldots,n$ at their equilibrium actions. If in a Nash equilibrium of a two-player game the players mix over different numbers of pure strategies, then this equilibrium lies in the convex hull of equilibria in which they mix over the same number of strategies \citep*{vorob1958equilibrium}. Thus any $2\times 3$ Nash equilibrium in a $2$-player game is a convex combination of $2\times 2$ equilibria. Hence there are two distinct Nash equilibria $\eta',\eta''$ such that $\nu_{12}=\nu_1\times \nu_2$ can be written as $\alpha \eta'+(1-\alpha)\eta''$ for some $\alpha\in(0,1)$. Define
$\mu'=\big((1-\varepsilon)\nu_{12}+\varepsilon \eta'\big)\times \nu_{-12}$ and
$\mu''=\big((1-\varepsilon)\nu_{12}+\varepsilon \eta''\big)\times \nu_{-12}$.
Since all incentive constraints that are slack at $\nu$ remain satisfied for $\varepsilon>0$ small enough, we have $\mu',\mu''\in \CCE(G)$ for sufficiently small $\varepsilon$. Moreover, $\nu=\alpha \mu'+(1-\alpha)\mu''$, again contradicting extremality. Hence the $2\times 3$ case is impossible.

Therefore, if a quasi-strict Nash equilibrium is extreme in $\CCE(G)$, it must be either pure or involve exactly two players mixing over two actions each.

\end{proof}

\section{Connection between IRCP and Guaranteed Utility Equilibria of \cite*{csoka2024guaranteed}}\label{app_GUE}

\cite*{csoka2024guaranteed} introduce the notion of a \emph{guaranteed utility equilibrium} (GUE), which significantly strengthens that of a pure Nash equilibrium.
In a game $G=(A,u)$, a pure action profile $a^*\in A$ is a GUE if it is Pareto optimal among pure profiles and  each player can secure her payoff at $a^*$ unilaterally, that is, one has $u_i(a_i^*,a_{-i})\ge u_i(a^*)$ for every $i$ and $a_{-i}$.

Enforcement games (Definition~\ref{def:enforcement}) constitute a class of games admitting a GUE. Indeed, the self-enforcing profile $a^*$ maximizes utilitarian welfare, hence it is Pareto optimal. Moreover, self-enforcement implies that player $i$ guarantees $u_i(a^*)$ by playing $a_i^*$ regardless of what others do. Since both properties are preserved under positive affine transformations, any game affinely equivalent to an enforcement game admits a GUE. Combining this observation with Theorem~\ref{th:IRCP}, we obtain that whenever a game has a unique $\IRCP$, this unique $\IRCP$ is a GUE.

The notion of GUE nevertheless differs from unique $\IRCP$ in two ways. First, there may be several distinct GUE action profiles that induce the same payoff vector; thus a game admitting a GUE may have multiple $\IRCP$. This multiplicity is easily avoided by imposing a strictness requirement: no other pure profile $a\neq a^*$ yields the same payoff vector $u(a)=u(a^*)$.

Second, and more importantly, the Pareto requirement in the definition of GUE compares $a^*$ only to other pure profiles. It does not rule out the possibility that a lottery $\mu\in\Delta(A)$ Pareto-improves upon $a^*$. In that case, the $\IRCP$ need not be a singleton even if the game admits a strict GUE.
\begin{table}[h]
\centering
\begin{tabular}{c|ccc}
 & $a_2$ & $b_2$ & $c_2$\\\hline
$a_1$ & $(0,0)$ & $(0,-1)$ & $(0,-1)$\\
$b_1$ & $(-1,0)$ & $(2,-1)$ & $(-1,-1)$\\
$c_1$ & $(-1,0)$ & $(-1,-1)$ & $(-1,2)$
\end{tabular}
\caption{A game with a unique GUE profile but non-unique $\IRCP$ and non-unique $\CCE$.}
\label{tab:gue_not_unique_ircp_cce_3x3}
\end{table}

To see this, consider the two-player game in Table~\ref{tab:gue_not_unique_ircp_cce_3x3}. The profile $(a_1,a_2)$ with utilities $(0,0)$ is the only GUE. In particular, it is Pareto optimal among pure profiles because every other profile gives at least one player payoff $-1$. 
However, lottery $\mu=\tfrac12(b_1,b_2)+\tfrac12(c_1,c_2)$ yields expected payoff $(1/2,1/2)$ and thus dominates $(a_1,a_2)$. Consequently, $\IRCP$ contains both $(a_1,a_2)$ and $\mu$
and is not a singleton. Moreover, $\mu$
is also a CCE, and so $\CCE$ is not unique either.

This example suggests strengthening the Pareto requirement and adding strictness in the definition of GUE. 
We call $a^*$ a \emph{strict fractional GUE} if it is Pareto optimal among lotteries over~$A$, no other lottery leads to the same utility profile, and each player can secure her payoff at $a^*$ unilaterally.

\begin{lemma}
A game $G$ is affinely equivalent to an enforcement game if and only if $G$ admits a strict fractional GUE.
\end{lemma}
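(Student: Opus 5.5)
The plan is to route both directions through Theorem~\ref{th:IRCP}. That theorem says affine equivalence to an enforcement game is the same as having a unique $\IRCP$. So it suffices to show that $G$ admits a strict fractional GUE if and only if $\IRCP(G)$ is a singleton.

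For the forward direction, suppose $G$ is affinely equivalent to an enforcement game $H=(A,v)$ with self-enforcing profile $a^*$. All three defining properties of a strict fractional GUE are invariant under positive affine transformations, so it is enough to check them in $H$.
\begin{itemize}
\item \emph{Unilateral guarantee.} This is immediate from $v_i(a_i^*,a_{-i})\ge 0=v_i(a^*)$.
\item \emph{Pareto optimality and strictness.} Take any lottery $\mu\ne\delta_{a^*}$. It puts positive weight off $a^*$, so its expected welfare is strictly negative. Hence some player gets a negative expected payoff under $\mu$. Therefore $\mu$ neither weakly Pareto-dominates $a^*$ nor yields the same utility vector $(0,\ldots,0)$.
\end{itemize}
Together these give Pareto optimality among lotteries and the strictness condition at once.

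For the converse, let $a^*$ be a strict fractional GUE of $G$. By the unilateral guarantee, playing $a_i^*$ secures $u_i(a^*)$, so the maximin level satisfies $\underline{u}_i\ge u_i(a^*)$. Now take any $\mu\in\IRCP(G)$. Then $\E_\mu u_i\ge\underline u_i\ge u_i(a^*)$ for every $i$, so $\mu$ weakly Pareto-dominates $\delta_{a^*}$. There are two cases:
\begin{itemize}
\item If some inequality is strict, $\mu$ strictly Pareto-improves on $a^*$, contradicting Pareto optimality among lotteries.
\item If all inequalities are equalities, $\mu$ yields the same utility profile as $a^*$, so strictness forces $\mu=\delta_{a^*}$.
\end{itemize}
Also, $\delta_{a^*}$ itself lies in $\IRCP(G)$. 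This follows either because $u_i(a^*)\ge$ the maximin level (the maximin level is at most the payoff from best responding to $a_{-i}^*$, and... ) or more simply as a consequence of the argument below. Hence $\IRCP(G)=\{\delta_{a^*}\}$, and Theorem~\ref{th:IRCP} yields affine equivalence to an enforcement game.

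The main obstacle is pinning down the meaning of ``Pareto optimal'' so that the case analysis closes. I will read it as weak-dominance Pareto optimality: no lottery makes everyone at least as well off and someone strictly better off. Combined with the strictness clause, this excludes every $\mu\ne\delta_{a^*}$ that weakly dominates $a^*$, which is exactly what the converse needs.

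A second subtlety is that $\IRCP(G)$ must actually contain $\delta_{a^*}$. Nonemptiness of $\IRCP$ (it contains any Nash equilibrium) handles this. Every IRCP was just shown to equal $\delta_{a^*}$, and the set is nonempty, so $\delta_{a^*}\in\IRCP(G)$ and it is unique. I would state this explicitly rather than try to verify $u_i(a^*)\ge\underline u_i$ directly.
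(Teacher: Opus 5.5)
Your proof is correct, and your forward direction is essentially the paper's: in $H$ the profile $a^*$ is the unique welfare maximizer, so no lottery weakly Pareto-dominates it or reproduces its payoff vector. Your reading of Pareto optimality (no lottery weakly better for all and strictly better for someone) is also the one implicit in the paper's use of strictly positive weights.

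The converse, however, runs in the opposite logical order. The paper proves the lemma without using Theorem~\ref{th:IRCP}. It invokes the Arrow--Barankin--Blackwell theorem to get weights $\gamma_i>0$ for which $a^*$ maximizes $\sum_i\gamma_i u_i$. It then uses strictness to see that $u(a^*)$ is a vertex of $\conv\{u(a):a\in A\}$, and perturbs $\gamma$ by an exposing vector so that the maximizer is unique. Finally it sets $v_i=\gamma_i(u_i-u_i(a^*))$. Only afterwards does it combine the lemma with Theorem~\ref{th:IRCP} to conclude that a unique $\IRCP$ is equivalent to a strict fractional GUE.

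You instead prove the elementary implication ``strict fractional GUE $\Rightarrow$ unique $\IRCP$'' directly and let Theorem~\ref{th:IRCP} produce the weights. What each route buys:
\begin{itemize}
\item Your route dispenses with Arrow--Barankin--Blackwell and the exposed-vertex perturbation. The price is that all of its duality content comes from the minimax argument inside Theorem~\ref{th:IRCP}, so the lemma is no longer independent of that theorem.
\item The paper's route is self-contained and identifies the enforcement-game weights as supporting Pareto weights of $u(a^*)$.
\item Combined with the equally elementary implication ``unique $\IRCP\Rightarrow$ strict fractional GUE'' (via Proposition~\ref{pr:IRCPpure} and the maximin-level argument in the proof of Theorem~\ref{th:IRCP}), the paper's route would even give a second proof of the hard direction of Theorem~\ref{th:IRCP}.
\end{itemize}

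One cleanup: drop the unfinished parenthetical suggesting that $u_i(a^*)\ge\underline u_i$ follows from best responses to $a^*_{-i}$. That would require already knowing that $a^*$ is a Nash equilibrium. The nonemptiness argument you settle on closes the step, as does simply noting that the product of maximin strategies always lies in $\IRCP(G)$.
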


\begin{proof}
By Theorem~1 of \cite*{arrow1953admissible}, $a^*$ is fractionally Pareto optimal if and only if there exist weights $\gamma_i>0$ such that $a^*$ maximizes $\sum_i \gamma_i u_i(a)$ over $a\in A$. In an enforcement game the self-enforcing profile $a^*$ is, by definition, the unique maximizer of $\sum_i u_i(a)$, so it is fractionally Pareto optimal and strict. Therefore it is a strict fractional GUE.

Conversely, let $a^*$ be a strict fractional GUE in $G$. The theorem by \cite*{arrow1953admissible} provides weights $\gamma_i>0$ such that $a^*$ maximizes $\sum_i \gamma_i u_i(a)$. Moreover, we may take these weights so that the maximizer is unique.\footnote{Strictness rules out other $\mu\in\Delta(A)$ with $\sum_a \mu(a)u(a)=u(a^*)$, so $u(a^*)$ is a vertex of $\conv\{u(a):a\in A\}$ and hence exposed. Let $\beta$ expose it. If $\gamma$ are \cite*{arrow1953admissible} weights, then for $\varepsilon>0$ small the perturbation $(1-\varepsilon)\gamma+\varepsilon\beta$ stays positive and makes $a^*$ the unique maximizer.} 
Consider the affinely equivalent game $H=(A,v)$ defined by $v_i(a)=\gamma_i\bigl(u_i(a)-u_i(a^*)\bigr)$. Then $a^*$ uniquely maximizes $\sum_i v_i(a)$ and $v(a^*)=0$. Moreover, each player can guarantee payoff $0$ in $H$ by playing $a_i^*$. Thus $H$ is an enforcement game.
\end{proof}

Comparing this lemma with Theorem~\ref{th:IRCP}, we conclude that a game has a unique $\IRCP$ if and only if it admits a strict fractional GUE.

\end{document}